\algrenewcommand\algorithmicrequire{\textbf{Input:}}
\algrenewcommand\algorithmicensure{\textbf{Output:}}
\algnewcommand\algorithmicinitialization{\textbf{Initialization:}}
\algnewcommand\Initialization{\item[\algorithmicinitialization]}%
\def \H {\ensuremath{\mathcal{H}}\xspace}
\def \I  {\ensuremath{\mathcal{I}}\xspace}
\def \F {\ensuremath{\mathcal{F}}\xspace}
\def \V {\ensuremath{\mathcal{V}}\xspace}
\def \C {\ensuremath{\mathcal{C}}\xspace}
\def \X {\ensuremath{\mathcal{X}}\xspace}
\def \Asgn {\mathit{asgn}}
\def \G {\mathit{g}}
\def \Trans {\mathit{Trans}}
\def \Inv {\mathit{Inv}}
\def \Init {\mathit{Init}}
\def \Flow {\mathit{Flow}}
\theoremstyle{definition}
\newtheorem{definition}{Definition}
\newtheorem{proposition}{Proposition}
\newtheorem{claim}{Claim}
\title{A Cegar-centric Bounded Reachability Analysis for Compositional Affine Hybrid Systems}
\author{Atanu Kundu}
\affiliation{\institution{Indian Association for the Cultivation of Science} \country{India}}
\email{mcsak2346@iacs.res.in}
\author{Pratyay Sarkar}
\affiliation{\institution{Indian Association for the Cultivation of Science}\country{India} }
\email{ugps2578@iacs.res.in}
\author{Rajarshi Ray}
\affiliation{\institution{Indian Association for the Cultivation of Science}\country{India}}
\email{rajarshi.ray@iacs.res.in}
\begin{document}

\begin{abstract}
Reachability analysis of compositional hybrid systems, where individual components are modeled as hybrid automata, poses unique challenges. In addition to preserving the compositional semantics while computing system behaviors, algorithms have to cater to the explosion in the number of locations in the parallel product automaton. In this paper, we propose a bounded reachability analysis algorithm for compositional hybrid systems with piecewise affine dynamics, based on the principle of counterexample guided abstraction refinement (CEGAR). In particular, the algorithm searches for a counterexample in the discrete abstraction of the composition model, without explicitly computing a product automaton. When a counterexample is discovered in the abstraction, its validity is verified by a refinement of the state-space guided by the abstract counterexample. The state-space refinement is through a symbolic reachability analysis, particularly using a state-of-the-art algorithm with support functions as the continuous state representation. In addition, the algorithm mixes different semantics of composition with the objective of improved efficiency. Step compositional semantics is followed while exploring the abstract (discrete) state-space, while shallow compositional semantics is followed during state-space refinement with symbolic reachability analysis. Optimizations such as caching the results of the symbolic reachability analysis, which can be later reused, have been proposed. We implement this algorithm in the tool \textsc{SAT-Reach} and demonstrate the scalability benefits.
 \end{abstract}
\maketitle

\section{Introduction}
A hybrid automaton is a widely accepted framework for modeling and specification of systems that exhibit a combination of discrete and continuous dynamics, called hybrid systems \cite{alur1995algorithmic,alur1993,Henzinger-HA}. Complex hybrid systems are often designed as compositions of several simpler hybrid systems that interact with each other. Such systems are modeled by the parallel composition of hybrid automata representations of individual components \cite{alur1993}. Reachability analysis of compositional hybrid systems is challenging, not only due to the compositional semantics but also because of the explosion in the number of locations in the product automaton. Preserving synchronization semantics and consistency of shared variables across components poses unique challenges when computing model behaviors. Among the successful methods for formal verification of systems having a large discrete state space is bounded model checking (BMC) \cite{clarke2001bounded,armin-sat}. It is a technique in which a logic formula is constructed for the unwinding of the model up to $k$ next state transitions, conjoined with the negation of the given safety property $\phi$. The satisfiability of the constructed formula implies the existence of a counterexample trajectory in the model within $k$ transitions, whereas the unsatisfiability of the formula concludes the safety of the model concerning the property $\phi$ and the bound $k$. Several works have extended BMC of discrete state transition systems to timed and hybrid systems (\textsc{MathSat}~ \cite{audemard2002bounded}, \textsc{HySat}~ \cite{FMSD05}, \textsc{Bach} \cite{sat-lp-iis}, \textsc{dReach} \cite{kong2015dreach}, \textsc{iSatODE} \cite{eggers2011iSATODE}), reducing BMC to the satisfiability problem of Boolean combinations of arithmetic constraints on real-valued variables. Recent advances in the scalability of Boolean satisfiability (SAT) and SMT (Satisfiability Modulo Theory) solvers have significantly contributed to the success of BMC. In a different line of research, bounded reachability analysis algorithms for hybrid automata employ symbolic state representations to compute an over-approximation of the set of states reachable within a bounded number of discrete transitions in the automaton and within a bounded local time horizon \cite{DBLP:conf/hvc/RayGDBBG15}, \cite{chen2013flow}, \cite{10.1145/3567425}. 

\noindent \textbf{Our Contribution} We present a bounded reachability analysis algorithm for compositional hybrid systems with piecewise affine dynamics, based on the principle of CEGAR. The analysis is bounded on the number of discrete transitions in the automaton and the local time horizon (the time horizon to search for successor states within a location of the automaton). The algorithm initially identifies an abstract counterexample and subsequently verifies its validity through a symbolic state-space exploration procedure specifically designed for compositional affine hybrid dynamics. In addition, motivated by \cite{10.1007/978-3-030-94583-1_23}, the algorithm mixes different compositional semantics in various phases of the analysis to optimize efficiency while preserving soundness. Specifically, step semantics is employed in the abstract counterexample enumeration phase, while shallow compositional semantics is utilized in the refinement phase. The abstract counterexample enumeration phase finds compositional paths from satisfying solutions of a propositional logic formula encoding the discrete dynamics of the components and the synchronization semantics. The refinement phase verifies the abstract counterexample by incorporating a symbolic reachability analysis algorithm. Many step compositional paths, which are \emph{stutter variants} of each other, map to a single path in shallow semantics. Therefore, reachability analysis of a path in shallow semantics suffices for a reachability analysis of all mapping paths in step compositional semantics. 

In contrast to bounded reachability analysis tools for affine hybrid systems, which require product construction as a prerequisite when processing compositional systems, our algorithm does not explicitly compute the product automaton. In addition to avoiding explicit product construction, a path-oriented state-space exploration has been shown to guide the search toward state-space regions of interest with respect to a safety property \cite{10.1145/3567425}. A mixed semantics-guided exploration further enhances the performance. We implement this algorithm in the tool \textsc{SAT-Reach} and demonstrate the scalability benefits when compared to state-space exploration tools that require explicit product construction, the state-of-the-art reachability analysis tool for affine hybrid systems \textsc{SpaceEx} that implements a technique of on-the-fly product construction, \textsc{BACH}, a layered BMC tool for compositional linear systems incorporating SAT solving in discrete layer and linear constraint solving in the continuous layer, and \textsc{dReach} that is a BMC tool for general non-linear hybrid systems based on $\delta$-approximate satisfiability solving.

\noindent \textbf{Related Works} 
Bounded Model Checking (BMC) for hybrid systems has been extensively investigated, with numerous studies \cite{audemard2002bounded,FMSD05,eggers2011iSATODE,damm2012exact,sat-lp-iis,kong2015dreach,10.1007/978-3-662-46681-0_4,FLGDCRLRGDM11,10.1145/3567425,DBLP:conf/hvc/RayGDBBG15} proposing diverse strategies. Below, we discuss the existing algorithms for linear, affine, and nonlinear hybrid systems, highlighting methodological distinctions and contrasting them with our proposed approach.

\textbf{BMC for Linear Hybrid Systems}
\textsc{HySAT} \cite{FMSD05} is a bounded model checker for linear hybrid automata that integrates a SAT solver with a linear programming routine as its core engine. The bounded verification problem is encoded as a Boolean combination of linear arithmetic constraints on real and Boolean variables that represent the continuous and discrete components of the system. Furthermore, optimizations such as isomorphy inference, constraint sharing, and heuristics for variable selection order boost performance. A symbolic verification algorithm for linear hybrid automata with a large discrete state space is proposed in \cite{damm2012exact}. This algorithm emphasizes symbolic representation and efficient state-space traversal methods, eliminating redundancy and minimizing constraints. An SMT solver is utilized to detect and eliminate redundant linear constraints. \textsc{BACH}~\cite{10.1007/978-3-030-94583-1_23} proposes a layered BMC routine for compositional linear hybrid automata (CLHA). Instead of encoding the entire bounded state space into one large SMT formula, the proposed method divides the problem into discrete and continuous layers. Candidate unsafe paths are enumerated by sat-solving in the discrete layer, and then the feasibility of the candidate paths is checked with linear constraint solving. A mix of compositional semantics has been proposed for efficiently enumerating a potentially large set of paths. Similar to \textsc{BACH}, our proposed algorithm follows a CEGAR-based state-space exploration strategy, where abstract counterexamples are searched from the discrete abstraction, followed by a set-based symbolic reachability analysis algorithm as a state-space refinement. Unlike \textsc{Bach}, where feasibility checking of a path is reduced to solving a set of linear constraints, we check the feasibility of abstract counterexamples with an over-approximative flowpipe computation. This is mainly because encoding affine hybrid dynamics as constraints leads to non-linear real arithmetic constraints with transcendental functions. Satisfiability solving of such formulae is known to be undecidable.

\textbf{BMC for Affine Hybrid Systems}
\textsc{SpaceEx} \cite{FLGDCRLRGDM11} computes the set of reachable states, also called a flowpipe, for linear and affine hybrid systems. Compositional systems are analyzed by an on-the-fly exploration strategy in which only the reachable parts of the product automaton are created in memory \cite {Intro2SpaceEx}. \textsc{XSpeed} \cite{DBLP:conf/hvc/RayGDBBG15} can compute a flowpipe for affine hybrid systems using parallel implementations to extract performance from multicore processors. \textsc{Flow*} \cite{chen2013flow} computes flowpipe of nonlinear hybrid systems using Taylor Models for accurate symbolic reachability analysis, focusing on bounded-depth exploration. \textsc{SAT-Reach} \cite{10.1145/3567425} is a bounded reachability analysis tool based on the CEGAR framework for affine hybrid systems. \textsc{SAT-Reach} implements a synergistic combination of SAT-based path enumeration, path-wise reachability analysis by flowpipe construction, and searching for an unsafe execution by the trajectory splicing algorithm. It can generate a counterexample as proof of a safety violation. All the above tools can verify reachability properties. \textsc{HyComp} \cite{10.1007/978-3-662-46681-0_4} is an SMT-based model checker for hybrid systems built over \textsc{NuXmv} and \textsc{NuSMV} model checkers and integrates \textsc{MathSAT} as an SMT solver. It supports LHA, affine, and polynomial flow dynamics. The tool can analyze a network of hybrid automata. It relies on an encoding of the network that discretizes the dynamics into an infinite-state transition system, which is subsequently analyzed using SMT-based verification techniques in \textsc{NuXMV}, such as BMC, K-induction, and IC3 with predicate abstraction. The tool can verify invariant properties, LTL properties, and scenario specifications. Tools like \textsc{Flow*}, \textsc{XSpeed} and \textsc{SAT-Reach} require a product construction for the analysis of compositional hybrid systems.
In contrast, the proposed algorithm does not require an explicit product construction. The closest work related to ours is \textsc{SAT-Reach}. The major enhancements being (a) not requiring a product construction and  (b) a mixed-compositional semantics guided exploration for efficiency.

\textbf{BMC for Non-linear Hybrid Systems}
\textsc{dReach} \cite{kong2015dreach} is a bounded model checking tool for nonlinear hybrid systems. It encodes a bounded depth $\delta$-over-approximation of the hybrid system together with the specification as SMT constraints over Type-2 computable functions, and an SMT solver \textsc{dReal} is used to solve the constraints. \textsc{iSAT-ODE}~\cite{eggers2011iSATODE} is a bounded verification tool for nonlinear hybrid systems. It encodes the $k$-unwinding of system dynamics and the verification condition as a formula with mixed Boolean and arithmetic constraints, including ODEs that may involve transcendental functions. The key contribution is a satisfiability checking algorithm that integrates DPLL-style search with alternating assignment and deduction in the region space. The implementation of exact arithmetic with outward rounding of the interval rounding ensures a numerical over-approximation in the enclosure computation. A reachability analysis of non-linear analog circuits by linearization into piecewise linear hybrid automata with many locations is reported in \cite{7059096}. Scalability is achieved by an on-the-fly linearization restricted to reachable locations. Although these tools can analyze subsumed affine hybrid systems in principle, bmc algorithms tailored for affine dynamics are generally efficient in comparison. 

\section{Preliminaries}

\begin{definition}\label{HA_TAH}
    A Hybrid Automaton (HA) \cite{Henzinger-HA, alur1995algorithmic} is a seven tuple $(\V$, $\X$, $\Init$, $\Inv$, $\Flow$, $\Sigma$, $\Trans)$ where:
    \begin{itemize}
        \item $\V = \{v_0,\ldots, v_{\ell}\}$ is a finite set of locations or modes of the hybrid automaton.
        \item $\X = \{x_1,x_2,\dots,x_n\}$ is a finite set of real-valued system variables. The number $n$, the cardinality of this set, represents the \emph{dimension} of the automaton. We write $\dot{\X} = \{\dot{x_1},\dot{x_2},\dots,\dot{x_n}\}$ for the set of dotted variables, which represent the first-order derivatives of the system variables during continuous evolution. We write $\X' = \{x'_1,x'_2,\dots,x'_n\}$ for the set of primed variables that represent the respective modified values after the actuation of a discrete transition.
        
        \item $\Init$, $\Inv$ are functions that map each location in $\V$ to a predicate whose free variables are in $\X$. $\Init$($\ell$) and $\Inv$($\ell$) are called the initial and invariant of the location $\ell$ respectively.  
        \item $\Flow$ is a function that maps each location in $\V$ to a predicate whose free variables are in $\X \cup \dot{\X}$. 
        \item $\Sigma$ is a finite set of automaton labels that include the \emph{stutter} label.
        \item $\Trans$ is a finite set of transitions between locations in $V$. Each transition $\delta \in \Trans$ is a 5-tuple ($\omega$, $v$, $\G$, $\Asgn$, $v'$) where $\omega \in \Sigma$ is the label, $v$ and $v'$ denote the source and destination location, respectively, $\G$ is a predicate whose free variables are from $\X$ and it is called the guard of the transition, and $\Asgn$ is a predicate whose free variables are from $\X \cup \X'$ and it is called the reset assignment of the transition. We require that for each location $v_i \in \V$, there is a \emph{stutter transition} ($stutter$, $v$, $true$, $\bigwedge_{x_i \in \X} x'_i = x_i$, $v$).
    \end{itemize}
\end{definition}

\noindent An affine hybrid automaton (AHA) is a class of hybrid automaton \cite{FLGDCRLRGDM11}. In an AHA, the flow predicate is of the form $\dot{x} = Ax + u$,  where $x \in \mathbb{R}^n$, $A$ is an $n \times n$ real matrix, $u \in \mathcal{U}$ is an input and $\mathcal{U} \subseteq \mathbb{R}^n$ is a compact convex set of inputs. The predicates in the initial, invariant of a location and in the guard of a transition are linear real arithmetic constraints over $\X$, whereas $\Asgn$ of a transition is a linear equality constraint of the form $x_i' = \sum_{i=1}^{n}r_ix_i + c_i$, where $x_i \in \X$ and $r_i$, $c_i \in \mathbb{R}$. 

\noindent \textit{Semantics:} A state of a hybrid automaton is a tuple $(v,x)$, where $v$ is a location and $x$ is a valuation of the variables in $\X$. A state must satisfy the invariant associated with the state's location. A state can change due to a continuous evolution over time, following the flow conditions in $Flow$ of the state's location. This is called a timed transition of states and is represented by the notation $(v,x) \xrightarrow{\tau}(v,y)$, $\tau$ being the dwelling time of the transition. A state of the automaton can also change instantaneously through the transitions of the automaton. Such transitions are called discrete transitions and represented by $(v,x) \xrightarrow{\delta}(v',x')$. A state $(v,x)$ allows for a transition $\delta$ with source location $v$ when $x$ satisfies the guard $\G$ of $\delta$. If an enabled transition is taken, the state changes instantaneously to $(v',x')$ such that $v'$ is the destination location of $\delta$ and $x, x'$ satisfy $\Asgn$ of $\delta$. We denote the graph structure of a hybrid automaton $\H$ by $G_{\H}(\V,E)$ consisting of the locations $\V$ of $\H$ and edges $E$. The edges are taken from the transitions $\delta$ of $\H$ without the guard and reset predicates. The graph structure essentially captures the discrete dynamics of a hybrid system. We now give a formal definition of a run of an automaton that formalizes an execution of the hybrid system:

\begin{definition} [Run of a Hybrid Automaton]
A \emph{run} $\sigma$ of a hybrid automaton is an alternating sequence of timed and discrete transitions:  
$(v_0,x_0) \xrightarrow{\tau_0}(v_0,y_{0})\xrightarrow{\delta_1} (v_1,x_{1}) \xrightarrow{\tau_1} (v_1,y_{1}), \dots,  \xrightarrow{\delta_{k}}(v_k,x_k)\xrightarrow{\tau_{k}} (v_k,y_{k})$
such that (1) $\Init(v_0)[\X := x_0]$ is true (2) $\Inv(v_i)[\X := x_i]$ and $\Inv(v_i)[\X := y_i]$ are both true for $0\leq i \leq k$. (3) In $\G$ and $asgn$ of $\delta_{i+1}$, $\G[\X := y_i]$ and $asgn[\X,\X' := y_i, x_{i+1}]$ are true for $0 \leq i < k$, where $\delta_{i+1} \in \Trans$ is a discrete transition.
The transitions labeled $\tau_i \in \mathbb{R}$ depict timed transitions. 
\end{definition}

\noindent A run is a finite representation of a timed trajectory of the automaton having potentially infinitely many states. The length of $\sigma$ is the number of discrete transitions of the run. We denote $initial(\sigma) = (v_0,x_0)$ and $final(\sigma)=(v_k,y_k)$ to be the initial and final states of the run, respectively. Complex systems are made up of several interacting systems. Such systems can be modeled as a composition of interacting hybrid automata communicating through shared variables and labels. In this work, we assume that there are no shared variables between components. A composition of hybrid automata is defined as follows:   

\begin{definition} [Composition of Hybrid Automata \cite{Henzinger-HA}]\label{def:composed_ha}
    Let $\H_1 = $ $(\V_1$, $\X_1$, $\Init_1$ $\Inv_1$, $\Flow_1$, $\Sigma_1$, $\Trans_1)$ and $\H_2 = $ $(\V_2$, $\X_2$, $\Init_2$, $\Inv_2$, $\Flow_2$, $\Sigma_2$, $\Trans_2)$ be two hybrid automata such that $\X_1 \cap \X_2 = \emptyset$. The composition of $\H_1$ and $\H_2$, denoted by $\H_1 \parallel \H_2$ is a hybrid automaton $\H = (\V$, $\X$, $\Init$, $\Inv$, $\Flow$, $\Sigma$, $\Trans)$ where:
    \begin{itemize}
        \item $\V = \V_1 \times \V_2$; $\X = \X_1 \cup \X_2$
        \item $\Init$, $\Inv$ are functions that maps each location in $\V$ to a predicate over the free variables in $\X$. $\Inv(\langle v_1, v_2\rangle)$ is a predicate given by $\Inv_1(v_1) \land \Inv_2(v_2)$, and $\Init(\langle v_1, v_2\rangle)$ is a predicate given by $\Init_1(v_1) \land \Init_2(v_2)$.
        \item $\Flow$ is a function that maps each location in $\V$ to a predicate whose free variables are in $\X \cup \dot{\X}$. $\Flow(\langle v_1,v_2\rangle)$ is given by the predicate $\Flow_1(v_1) \land \Flow_2(v_2)$.
        \item $\Trans$ is defined as follows:
        \begin{itemize}
            \item For every $(\omega_1$, $v_1$, $\G_1$, $\Asgn_1$, $v_1') \in \Trans_1$ and $(\omega_2$, $v_2$, $\G_2$, $\Asgn_2$, $v_2')$ $\in \Trans_2$, if $\omega_1 = \omega_2$, then $(\omega$, $\langle v_1,v_2\rangle$, $\G_1 \land \G_2$, $\Asgn_1 \land \Asgn_2$, $\langle v_1',v_2'\rangle) \in \Trans$, where $\omega = \omega_1 = \omega_2$ is called the shared or synchronization label.
            \item For every $(\omega_1$, $v_1$, $\G_1$, $\Asgn_1$, $v_1')$ $\in \Trans_1$ such that $\omega_1$ is not a label in any transition in $\Trans_2$, then $(\omega_1$, $\langle v_1,v_2\rangle$, $\G_1$, $\Asgn_1 \land (\bigwedge_{x_i \in \X_2} x'_{i} := x_i)$, $\langle v_1',v_2\rangle)$ $\in \Trans, \forall v_2 \in \V_2$.
            \item For every $(\omega_2$, $v_2$, $\G_2$, $\Asgn_2$, $v_2') \in \Trans_2$ such that $\omega_2$ is not a label in any transition in $\Trans_1$, then $(\omega_2$, $\langle v_1,v_2\rangle$, $\G_2$, $\Asgn_2 \land (\bigwedge_{x_i \in \X_1} x'_{i} := x_i)$, $\langle v_1,v_2'\rangle)$ $\in \Trans, \forall v_1 \in \V_1$.
        \end{itemize}
    \end{itemize}
\end{definition}

\noindent \textit{Composition Semantics:} The composition of $m$ automata $\H_1$, $\H_2$, $\ldots$ , $\H_m$ is denoted by $\H = \H_1 \parallel \H_2 \parallel \ldots \parallel \H_m$ and is defined recursively as $\H1 \parallel \H'$ where $\H' = \H_2 \parallel ... \parallel \H_m$. A state $(\langle v_0,v_1,\ldots,v_m\rangle,x)$ represents the simultaneous states of the components. The variables of each component evolve in parallel following their respective flow conditions. The transitions across components having a shared label must be activated synchronously. In contrast, only one local transition (transition with an unshared label) of any member can be activated at a time. Transitions with a \emph{stutter} label are considered local to a component. The activation of stutter transitions has no effect semantically. When each component is an affine hybrid automaton (AHA), we denote their composition as a compositional affine hybrid automaton (CAHA).

\begin{definition}[A Path in a Composition]
    Given a composition  $\mathcal{H}$ of $m$ hybrid automata, a finite path $\rho$ of length $k$ in the composition is an alternating sequence of locations and transitions: 
        $$\rho = \langle v^0_1,v^0_2,\dots, v^0_m\rangle\xrightarrow{\omega^1}\langle v^1_1,v^1_2,\dots, v^1_m\rangle\xrightarrow{\omega^2}\dots
        \xrightarrow{\omega^k} \langle v^k_1,v^k_2,\dots, v^k_m\rangle$$
 
    \noindent such that $\langle v^i_1,v^i_2,\dots, v^i_m\rangle \in \mathcal{V}$ of $\mathcal{H}$, for $i \in \{0,1, \ldots, k \}$ and $\omega^i$ is the label of a non-stutter transition in $\Trans$ of $\mathcal{H}$ with source as $\langle v^{i-1}_1,v^{i-1}_2,\dots, v^{i-1}_m\rangle$ and destination as $\langle v^{i}_1,v^{i}_2,\dots, v^{i}_m\rangle$, for $i \in \{1,2,\ldots,k\}$. 
\end{definition}
\begin{figure}[h]
    \centering
    \vspace{-15pt}
    \includegraphics[scale=0.55]{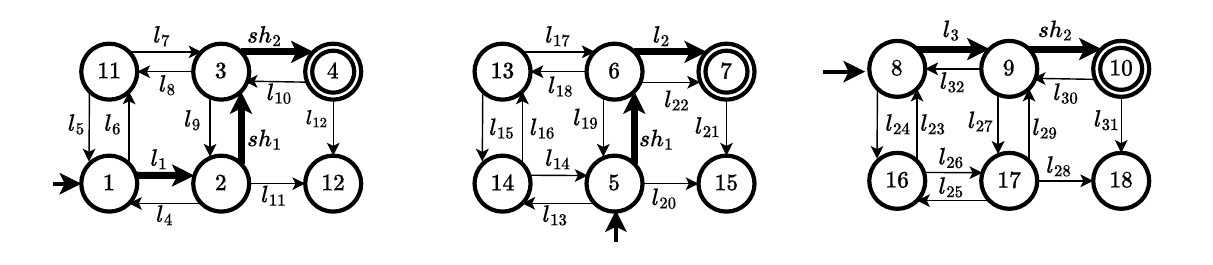}
    \caption{The hybrid automata of three navigation objects (omitting continuous dynamics). The bold transitions demonstrate the compositional path $\langle 1, 5, 8 \rangle \xrightarrow{l_{1}} \langle 2, 5 , 8 \rangle \xrightarrow{sh_{1}} \langle 3, 6, 8 \rangle \xrightarrow{l_{2}} \langle 3, 7, 8 \rangle \xrightarrow{l_{3}} \langle 3, 7, 9\rangle \xrightarrow{sh_2} \langle 4, 7, 10 \rangle$.}
    \label{fig:composed-NAV}
\end{figure}

\begin{wrapfigure}[12]{R}{0.5\textwidth}
    \centering
    \includegraphics[width=0.48\textwidth]{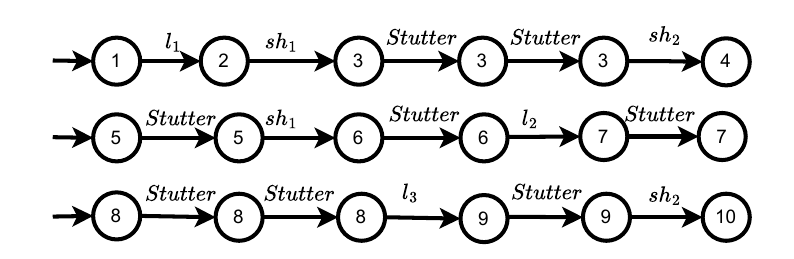}
    \caption{An interleaving compositional path $\langle \rho_1, \rho_2, \rho_3 \rangle$ 
    projected from 
    $\rho = \langle 1, 5, 8\rangle \xrightarrow{l_1} \langle 2, 5, 8 \rangle \xrightarrow{sh_1} 
    \langle 3, 6, 8\rangle \xrightarrow{l_3} \langle 3, 6, 9 \rangle \xrightarrow{l_{2}} 
    \langle 3, 7, 9\rangle \xrightarrow{sh_{2}} \langle 4, 7, 10 \rangle$. }
    \label{fig:interleave_path}
\end{wrapfigure}

As an example to illustrate our algorithm in later sections and to explain some of the definitions relevant to the discussion, we consider a system composed of three navigation objects in the $\mathbb{R}^2$ plane divided into a 2 $\times$ 3 grid of unit cells. Each navigation object is a hybrid system, following affine dynamics of motion. The motion dynamics may be potentially different in each cell. When an object transits from one cell to the other, its dynamics change instantaneously to that of the destination cell. In addition, the objects synchronize their transitions between designated cells. Figure \ref{fig:composed-NAV} shows the locations, transitions, and labels of the hybrid automaton models of the three navigation objects, omitting their continuous dynamics. The first two automata have a shared transition label $sh_1$, while the first and last automata have a shared transition label $sh_2$. The bold transitions in Figure \ref{fig:composed-NAV} show a path that starts from location $\langle 1,5,8\rangle$ and terminates at $\langle4,7,10\rangle$ in the composition.

\begin{definition} [An Interleaving Compositional Path \cite{10.1007/978-3-030-94583-1_23}]
Given a path $\rho = \langle v_1^0,v_2^0,\dots, v_m^0\rangle \xrightarrow{\omega^1} \langle v_1^1,v_2^1,\dots, v_m^1\rangle \xrightarrow{\omega^2} \dots \xrightarrow{\omega^{k}} \langle v_1^k,v_2^k,\dots, v_m^k\rangle$ in $\H = \H_1 \parallel \H_2 \parallel \ldots \parallel \H_m$, the projection of $\rho$ on a member component $\H_i$ is a component path $\rho_i = \langle v_i^0 \rangle \xrightarrow{\omega_i^1} \langle v_i^1 \rangle \xrightarrow{\omega_i^2} \dots \xrightarrow{\omega_i^{k}} \langle v_i^k\rangle$, where $\omega_i^j = \omega^j$ if $\omega^j$ a label of a transition in $\H_i$, otherwise $\omega_i^j$ = stutter. We call the tuple $\langle \rho_1, \rho_2, \dots, \rho_m \rangle$ an interleaving compositional path.
\end{definition}

\begin{wrapfigure}{R}{0.5\textwidth}
        \centering
            \begin{subfigure}[b]{0.5\linewidth}
                \centering
                \includegraphics[scale=0.42]{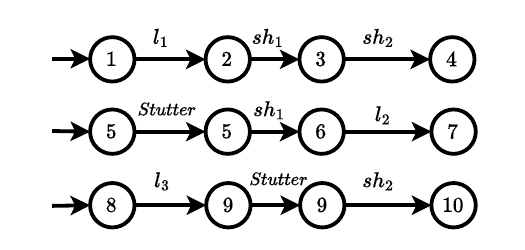}
                \caption{}
                \label{fig:path-intv1}
            \end{subfigure}%
            \begin{subfigure}[b]{0.5\linewidth}
                \centering
                \includegraphics[scale=0.42]{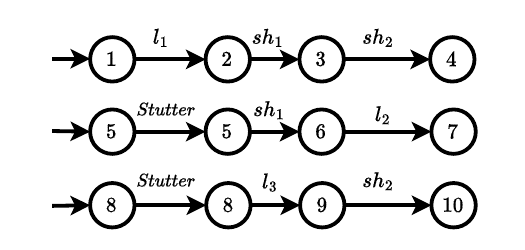}
                \caption{}
                \label{fig:path-step1}
            \end{subfigure}
        \caption{ a) The step compositional path projected from a path in step semantics $\rho_{step} = \langle 1,5,8\rangle \xrightarrow[]{\{l_1,l_3\}}\langle 2,5,9\rangle\xrightarrow{\{sh_1\}}\langle 3,6,9\rangle\xrightarrow[]{\{sh_2,l_2\}}\langle 4,7,10\rangle$. b) Another step compositional path projected from a path in step semantics  $\langle 1,5,8\rangle \xrightarrow[]{\{l_1\}}\langle 2,5,8\rangle\xrightarrow[]{\{sh_1,l_3\}}\langle 3,6,9\rangle\xrightarrow[]{\{sh_2,l_2\}}\langle 4,7,10\rangle$. The local transitions $l_3$ and $l_2$ are activated concurrently with the synchronization transitions $sh_1$ at step 2 and $sh_2$ at step 3, respectively.}
        \label{fig:steppaths}
\end{wrapfigure}
\noindent An interleaving compositional path is a representation of separated local paths of members from a path in the composition. The length of the local paths is kept equal to the length of the compositional path by adding stutter transitions. Figure \ref{fig:interleave_path} shows the interleaving compositional path of the path shown in Figure \ref{fig:composed-NAV}.

\noindent \emph{Step semantics} of composition proposed in \cite{DBLP:conf/forte/BuCLMT10, 10.1007/978-3-030-94583-1_23} allows the simultaneous activation of more than one local transition of members in one \emph{step}. In a synchronization transition, the non-syncing members may simultaneously activate their local transitions, while the syncing members must activate the shared transitions in synchrony. The tuple of paths of components projected from a compositional path in step semantics is called a \emph{step compositional path}.

\noindent Figure \ref{fig:steppaths} depicts a couple of step compositional paths of 3 steps between $v^{init}$ and  $v^{unsafe}$ of the composition in Figure \ref{fig:composed-NAV}. In Fig. \ref{fig:path-intv1}, the non-stutter transitions activated together in a step of a step compositional path $\rho_{step}$ are represented as a set of their labels. For instance, the two local transitions $l_1$ and $l_3$ are activated simultaneously in step 1. Similarly, a local transition $l_2$ is triggered simultaneously with the synchronization transition $sh_2$ in step 3 in Fig. \ref{fig:path-step1}. The path shown in Figure \ref{fig:composed-NAV} is succinctly represented by either of the step compositional paths. We define the length of a step compositional path as the number of steps it contains. For example, the length of the step compositional paths in Figure \ref{fig:steppaths} is 3.

\begin{wrapfigure}[8]{R}{0.4\textwidth} 
    \centering
    \vspace{-25pt}
    \includegraphics[width=0.9\linewidth]{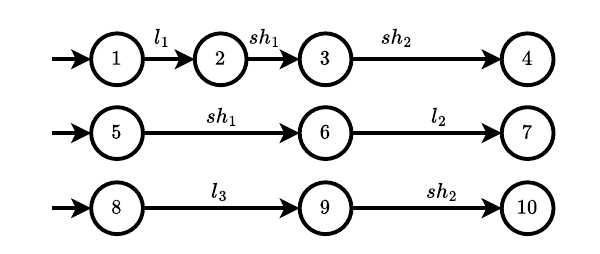}
    \caption{The shallow compositional path $\rho_{shallow}$ is obtained by removing stutter transitions from the path $\rho_{step}$.}
    \label{fig:shallowpath}
\end{wrapfigure}

\noindent A stutter-free variant of a step composition path $\rho_{step}$ is called a \emph{shallow} compositional path $\rho_{shallow}$ \cite{10.1007/978-3-030-94583-1_23}. A \emph{shallow} compositional path is obtained by removing the stutter transitions from a step compositional path. For example, Figure \ref{fig:shallowpath} shows the shallow compositional path obtained from the step compositional paths defined in Figure \ref{fig:steppaths}.

\begin{definition} [Bounded Safety Specification]\label{defn:bounded_safety_spec}
Given a composition of $m$ hybrid automata $\H=\H_1 \parallel \H_2 \parallel \ldots \parallel\H_m$, an initial configuration $\I=(v^{init},\C^{init})$ where $\C^{init} \subseteq \Inv(v^{init})$, an unsafe configuration $\F = (v^{unsafe}$, $~\C^{unsafe})$ where $v^{unsafe} \in \V$ and $\C^{unsafe} \subseteq \Inv(v^{unsafe})$ and an analysis bound $k$, the composition $\H$ is said to be safe if there exists no run $\sigma$ of $\H$ such that the length of $\sigma$ is $\leq k$, $initial(\sigma) \in \I$ and $final(\sigma) \in \F$.
\end{definition}

\noindent In this paper, we address the problem of deciding the bounded safety of the composition of $m$ affine hybrid automata.

\section{Methodology}

We now present the CEGAR-based bounded reachability analysis procedure, which combines SAT-based compositional path enumeration with symbolic reachability analysis. In this context, a compositional path refers to an abstract counterexample identified in the discrete abstraction, while the symbolic reachability analysis serves as the refinement phase to validate this abstract counterexample. Figure \ref{fig:2} describes the flow of our proposed algorithm, which takes as input a bounded safety specification (see Defn \ref{defn:bounded_safety_spec}) and $m$ affine hybrid automata $\H_1, \H_2, \dots, \H_m$. The algorithm terminates with the following two outcomes: a) the instance is safe, which means the forbidden states are not reachable from the initial states for the given bound of analysis. b) The safety of the instance is unknown. This is concluded when the reachable states intersect with the forbidden states within the analysis bound. In this case, it remains undecided whether the reachability is due to the over-approximate reachable set computation or due to actual unsafety. The first step in our procedure is to retrieve the step compositional paths of bounded length from the initial to the unsafe location. These paths are obtained by constructing a propositional logic formula from the graph structure of the component automaton. The satisfiability of the formula results in finding a step compositional path from the satisfying assignment. Then, a step compositional path is translated to the corresponding shallow compositional path. The reason for converting from a path in step semantics to shallow semantics is discussed in \cite{10.1007/978-3-030-94583-1_23}, and we shall also briefly discuss the motivation in Section \ref{postC_postD}. The next step is symbolic state-space computation of the shallow compositional path to decide the reachability of the unsafe location due to continuous dynamics. When unreachable, the negation of the path is conjoined with the formula to exclude it in the retrieval of the next path. While exploring the state space of a path that turns out to be safe, additional safe paths are inferred and eliminated from further enumeration and exploration, boosting efficiency. 

\begin{wrapfigure}[21]{r}{0.56\textwidth} 
    \centering
    \includegraphics[scale=0.36]{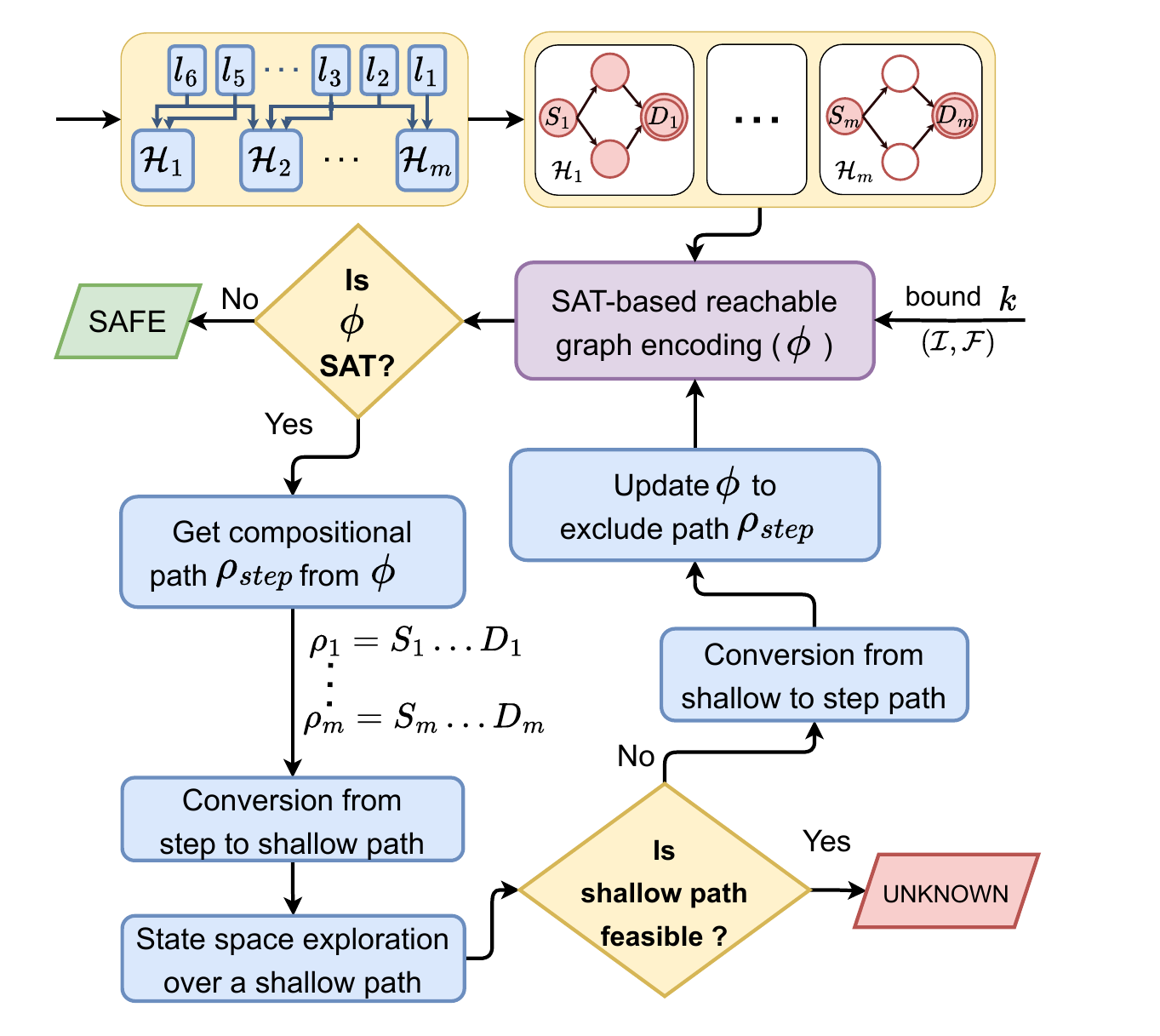}
    \caption{Flow of our proposed algorithm. The arrow indicates the workflow between the execution steps.}
    \label{fig:2}
\end{wrapfigure}

\noindent In addition, an optimization technique has been introduced to improve computational efficiency by leveraging memoization of the computed reachable states during state-space exploration. When all possible paths turn out to be safe, the algorithm terminates by declaring the composition safe. On the contrary, when at least one path is found unsafe, the algorithm concludes the safety of the instance to be unknown, due to the over-approximation in state-space computation. The subsequent sections discuss the details.

\subsection{Motivating Example}\label{motivating_exam}

\begin{table}[]
\caption{The compositional paths of length $5$ represented by step compositional paths shown in Fig. \ref{fig:steppaths} }
\footnotesize
\label{tab:inpaths}
\begin{equation}
\begin{aligned}
    \text{a) } & \langle 1, 5, 8 \rangle \xrightarrow{l_{1}} \langle 2, 5 , 8 \rangle \xrightarrow{sh_{1}} \langle 3, 6, 8 \rangle \xrightarrow{l_{2}} \langle 3, 7, 8 \rangle \xrightarrow{l_{3}} \langle 3, 7, 9\rangle \xrightarrow{sh_2} \langle 4, 7, 10 \rangle \\
    \text{b) } & \langle 1, 5, 8 \rangle \xrightarrow{l_{1}} \langle 2, 5 , 8 \rangle \xrightarrow{sh_{1}} \langle 3, 6, 8 \rangle \xrightarrow{l_{3}} \langle 3, 6, 9 \rangle \xrightarrow{l_{2}} \langle 3, 7, 9\rangle \xrightarrow{sh_2} \langle 4, 7, 10 \rangle \\
    \text{c) } & \langle 1, 5, 8 \rangle \xrightarrow{l_{1}} \langle 2, 5 , 8 \rangle \xrightarrow{l_{3}} \langle 2, 5, 9 \rangle \xrightarrow{sh_{1}} \langle 3, 6, 9 \rangle \xrightarrow{l_{2}} \langle 3, 7, 9\rangle \xrightarrow{sh_2} \langle 4, 7, 10 \rangle \\
    \text{d) } & \langle 1, 5, 8 \rangle \xrightarrow{l_{3}} \langle 1, 5 , 9 \rangle \xrightarrow{l_{1}} \langle 2, 5, 9 \rangle \xrightarrow{sh_{1}} \langle 3, 6, 9 \rangle \xrightarrow{l_{2}} \langle 3, 7, 9\rangle \xrightarrow{sh_2} \langle 4, 7, 10 \rangle \\
    \text{e) } & \langle 1, 5, 8 \rangle \xrightarrow{l_{1}} \langle 2, 5 , 8 \rangle \xrightarrow{l_{3}} \langle 2, 5, 9 \rangle \xrightarrow{sh_{1}} \langle 3, 6, 9 \rangle \xrightarrow{sh_{2}} \langle 4, 6, 10\rangle \xrightarrow{l_2} \langle 4, 7, 10 \rangle \\
    \text{f) } & \langle 1, 5, 8 \rangle \xrightarrow{l_{1}} \langle 2, 5 , 8 \rangle \xrightarrow{sh_{1}} \langle 3, 6, 8 \rangle \xrightarrow{l_{3}} \langle 3, 6, 9 \rangle \xrightarrow{sh_{2}} \langle 4, 6, 10\rangle \xrightarrow{l_2} \langle 4, 7, 10 \rangle \\
    \text{g) } & \langle 1, 5, 8 \rangle \xrightarrow{l_{3}} \langle 1, 5 , 9 \rangle \xrightarrow{l_{1}} \langle 2, 5, 9 \rangle \xrightarrow{sh_{1}} \langle 3, 6, 9 \rangle \xrightarrow{sh_{2}} \langle 4, 6, 10\rangle \xrightarrow{l_2} \langle 4, 7, 10 \rangle
\end{aligned}
\end{equation}
\end{table}
\noindent Consider the bounded safety verification problem of the compositional system of Figure \ref{fig:composed-NAV}, given $v^{init}=\langle 1,5,8\rangle$ and $v^{unsafe}=\langle 4,7,10\rangle$ for some initial and unsafe configuration, respectively, and for a bound of analysis, say $5$. Bounded reachability analysis algorithms that explicitly compute the product of the member automata and then explore the state-space of the composition spend considerable computational resources in terms of time and memory in product construction. For example, the product of the three components results in a composition with $216$ locations and $1164$ transitions. Secondly, algorithms that explore the state-space of the product automaton using uninformed search algorithms such as BFS/DFS may spend computational resources in irrelevant regions of the state-space (exploring regions of the automaton that can never lead to an unsafe state). In contrast, our proposed algorithm does not explicitly construct the product of the member automata. Moreover, a CEGAR-based state-space exploration guides the search along relevant regions of the state-space. A detailed discussion on the advantages of CEGAR-based state-space exploration in affine monolithic hybrid automata can be found in \cite{10.1145/3567425}. The first step of our algorithm is to enumerate all compositional paths in the step semantics that start from $v^{init}$ and end at $v^{unsafe}$ and map to compositional paths in the interleaving semantics of length less than or equal to the given bound. 

\begin{wrapfigure}{R}{0.5\textwidth}
        \centering
        \vspace{-15pt}
            \begin{subfigure}[b]{0.5\linewidth}
                \centering
                \includegraphics[scale=0.42]{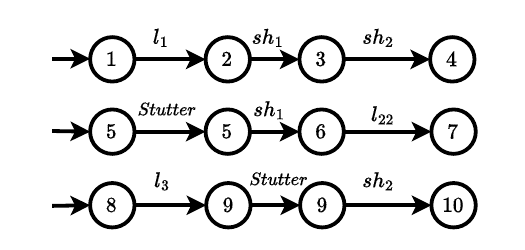}
                \caption{}
                \label{fig:path-intv}
            \end{subfigure}%
            \begin{subfigure}[b]{0.5\linewidth}
                \centering
                \includegraphics[scale=0.42]{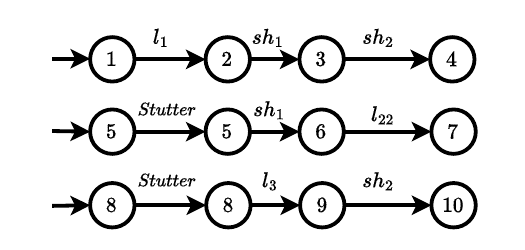}
                \caption{}
                \label{fig:path-step}
            \end{subfigure}
        \caption{a) The step compositional path obtained as a projection of a path in step semantics $\rho_{step'} = \langle 1,5,8\rangle \xrightarrow[]{\{l_1,l_3\}}\langle 2,5,9\rangle\xrightarrow{\{sh_1\}}\langle 3,6,9\rangle\xrightarrow[]{\{sh_2,l_{22}\}}\langle 4,7,10\rangle$. b) Another step compositional path obtained as a projection of a path in step semantics  $\langle 1,5,8\rangle \xrightarrow[]{\{l_1\}}\langle 2,5,8\rangle\xrightarrow[]{\{sh_1,l_3\}}\langle 3,6,9\rangle\xrightarrow[]{\{sh_2,l_{22}\}}\langle 4,7,10\rangle$.}
        \label{fig:steppaths2}
    \end{wrapfigure}

\noindent In our example of Figure \ref{fig:composed-NAV}, there are $14$ compositional paths starting from $v^{init}$ and ending at $v^{unsafe}$ of length bounded by 5. Table \ref{tab:inpaths} shows 7 of these paths, and the rest of the 7 paths can be obtained by replacing the transitions of the label $l_2$ with the transitions of the label $l_{22}$. The path enumeration in our procedure finds four step-compositional paths shown in Fig. \ref{fig:steppaths} and \ref{fig:steppaths2}. Among these, the two step compositional paths shown in Fig. \ref{fig:steppaths} both capture the seven compositional paths depicted in Table \ref{tab:inpaths}. Similarly, the remaining step compositional paths shown in Fig. \ref{fig:steppaths2} represent another set of seven compositional paths, where transition $l_{2}$ is replaced by transition $l_{22}$. Therefore, the enumeration of two step compositional paths suffices to capture all 14 compositional paths, improving efficiency. 

\begin{wrapfigure}{r}{0.35\linewidth} 
    \centering
    \includegraphics[scale=0.5]{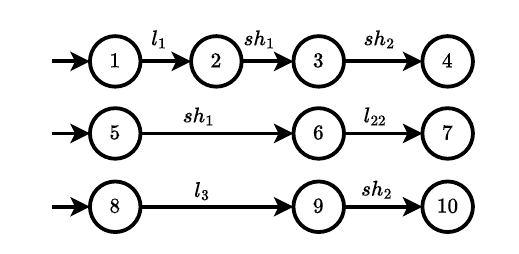}
    \caption{Another shallow compositional path $\rho_{shallow'}$ obtained from a step compositional path $\rho_{step'}$ after removing the stutter transitions.}
    \label{fig:shallowpath2}
\end{wrapfigure}
\noindent The redundant step compositional paths are pruned by incorporating optimization techniques in the path enumeration process. Consequently, these paths are converted to paths in shallow compositional semantics. Specifically, the step compositional paths shown in Fig. \ref{fig:steppaths} map to a single shallow compositional path as shown in Fig. \ref{fig:shallowpath}, while the paths shown in Fig. \ref{fig:steppaths2} map to another shallow compositional path shown in Fig. \ref{fig:shallowpath2}. Note that the analysis of 14 compositional paths boils down to the analysis of a couple of shallow compositional paths. Finally, for each shallow compositional path, the set of reachable states arising from the continuous dynamics and consistent with the path is computed, with the objective of verifying the bounded safety. To compute the reachable states, a shallow path is unfolded to get a computation tree, a section of the product automaton, capturing the various inter-leavings of transitions in the path. For example, the computation tree of the shallow compositional path $\rho_{shallow}$ is shown in Fig. \ref{fig:composed-state-space}. The computation tree for the other shallow path $\rho_{shallow'}$ will be similar except that the transitions of the label $l_2$ will be replaced by the transitions of the label $l_{22}$. The computation tree is explored in a depth-first manner with successor symbolic state computation operators, which will be discussed in Section \ref{postC_postD}. Furthermore, observe that the computation trees of $\rho_{shallow}$ and $\rho_{shallow'}$ share most of the locations and transitions. The result of exploring the computation tree of one path is memoized and reused while exploring the computation tree of the other path, enabling further efficiency. Due to the path-centric nature of exploration, the algorithm computes the symbolic state successor of only $10$ locations of the computation tree shown in Figure \ref{fig:composed-state-space}, out of the 216 locations of the complete product automaton. 

\begin{wrapfigure}[16]{R}{0.25\linewidth} 
     \centering
    \includegraphics[scale=0.5]{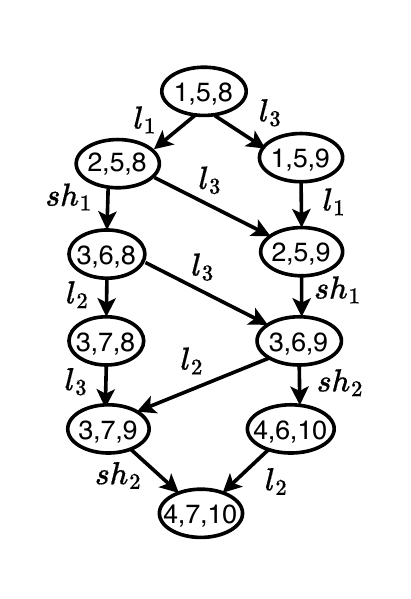}
    \caption{The computation tree unfolded from the shallow compositional path in Figure \ref{fig:shallowpath}.}
    \label{fig:composed-state-space}
\end{wrapfigure}

\subsection{Retrieving Step Compositional Paths}
Step semantics of composition permit multiple transitions to occur within a single step, and therefore, a step compositional path can represent various interleaving compositional paths. Analysis of a step compositional path involves examining all the interleaving compositional paths that it represents. As a result, enumerating the step compositional paths significantly accelerates the compositional path enumeration procedure to verify safety requirements. We first show that all interleaving compositional paths represented by a step compositional path have the same length.

\begin{proposition}\label{prop1}
Let $S_1, S_2,\dots, S_l$ be the set of labels of non-stutter transitions in each step of a step compositional path $\rho_{step}$ of length $l$. The interleaving path(s) succinctly represented by $\rho_{step}$ have length $ \sum_{i=1}^{l}|S_i|$.  
\end{proposition}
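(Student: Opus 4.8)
The plan is to unwind the definitions of a step compositional path and its associated interleaving compositional path(s), and to track how the length changes when we expand a single step into a sequence of single-transition moves. Recall that a step compositional path of length $l$ is a projection of a compositional path in step semantics, where step $i$ fires exactly the non-stutter transitions whose labels form the set $S_i$ (together with stutter transitions in all other components). An interleaving compositional path, by contrast, is a projection of a compositional path in which each move fires exactly one non-stutter transition (a single local transition, or a set of synchronized transitions sharing one label, which still counts as one move bearing one label). The key observation is that a step firing the label set $S_i$ in one step corresponds, in any interleaving refinement, to exactly $|S_i|$ consecutive moves, one per label in $S_i$, ordered in some way; the synchronization constraints and the absence of shared variables guarantee that any such ordering is itself a legal compositional path, so at least one interleaving refinement exists, and all of them expand step $i$ into precisely $|S_i|$ moves.

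First I would make precise the correspondence between a step and its interleaving expansions: for a fixed step $i$ with label set $S_i = \{\omega_{i,1},\dots,\omega_{i,|S_i|}\}$, expanding that step means replacing the single composite transition $\langle \bar v^{i-1}\rangle \xrightarrow{S_i} \langle \bar v^{i}\rangle$ by a chain $\langle \bar v^{i-1}\rangle \xrightarrow{\omega_{i,\pi(1)}} \cdots \xrightarrow{\omega_{i,\pi(|S_i|)}} \langle \bar v^{i}\rangle$ for some permutation $\pi$, where each intermediate tuple agrees with $\bar v^{i-1}$ in the components not yet moved and with $\bar v^{i}$ in those already moved. Since distinct labels in $S_i$ act on disjoint sets of components (a label appears in a step at most once, and synchronized labels bind their own component set; non-stutter transitions in one step touch pairwise disjoint component-sets by the definition of step semantics), the order of expansion is irrelevant for reaching $\bar v^{i}$, and each intermediate tuple lies in $\V$ because invariants factor componentwise. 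This shows each step contributes exactly $|S_i|$ transitions to any interleaving compositional path it represents, and also that at least one such path exists.

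Next I would conclude by summation: concatenating the expansions of steps $1$ through $l$ yields an interleaving compositional path whose length is $\sum_{i=1}^{l} |S_i|$, and since every interleaving compositional path represented by $\rho_{step}$ arises this way (its projection onto the step structure must group its moves into the blocks $S_1,\dots,S_l$ in order), every such path has exactly this length. Hence the length is independent of which interleaving refinement is chosen.

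\textbf{Main obstacle.} The routine arithmetic ($l$ steps, $|S_i|$ transitions each, total $\sum |S_i|$) is immediate; the real content is the claim that \emph{every} legal interleaving expansion of a step has the same block structure and the same length, i.e.\ that one cannot "absorb" a transition into a neighboring step or split a synchronized move. I expect the main work to be in justifying that the non-stutter transitions fired within a single step act on pairwise disjoint component-sets (so that their expansion is order-independent and contributes exactly $|S_i|$ moves), and in pinning down the formal definition of "the interleaving path(s) represented by $\rho_{step}$" so that the grouping into $S_1,\dots,S_l$ is forced rather than merely possible — this is where the definition of step semantics from \cite{DBLP:conf/forte/BuCLMT10, 10.1007/978-3-030-94583-1_23} must be invoked carefully.
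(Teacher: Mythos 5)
Your proposal is correct and follows essentially the same route as the paper: each step's non-stutter transitions must be serialized into exactly $|S_i|$ single-transition moves under interleaving semantics, and summing over the $l$ steps gives $\sum_{i=1}^{l}|S_i|$. The paper's proof states this more tersely (without your explicit order-independence and block-structure justifications), but the underlying argument is the same.
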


\begin{proof}
 In interleaving semantics, no two transitions labeled by $S_i$ in step $i$ of a path $\rho_{step}$ can occur simultaneously. Therefore, the length of any interleaving path of $\rho_{step}$ will be of length $ \sum_{i=1}^{l}|S_i|$. The distinct interleavings of transitions in $\rho_{step}$ give the various interleaving compositional paths that it represents; all will have length $ \sum_{i=1}^{l}|S_i|$.
\end{proof}

\begin{proposition}\label{prop2}
    For every interleaving path $\rho$ of length $l$ in a composition $\H$, there exists a step path $\rho'$ of $\H$ of length at most $l$ that represents $\rho$.
\end{proposition}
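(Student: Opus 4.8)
The plan is to show that any interleaving path $\rho$ can itself be viewed as a step path (trivially, with one transition per step) and then argue that we may compress it into a step path of length at most $l$ — in fact, the cleanest route is just to exhibit the trivial step path and observe it represents $\rho$. More precisely, given $\rho = \langle v_1^0,\dots,v_m^0\rangle \xrightarrow{\omega^1} \langle v_1^1,\dots,v_m^1\rangle \xrightarrow{\omega^2} \dots \xrightarrow{\omega^l} \langle v_1^l,\dots,v_m^l\rangle$, I would define $\rho'$ to be the step path of length $l$ whose $i$-th step has the single-element label set $S_i = \{\omega^i\}$. First I would check that each step of $\rho'$ is a legal step in the step semantics: a step consisting of exactly one non-stutter transition (either a local transition of one component, with all other components stuttering, or a synchronization transition taken in synchrony by the syncing components, with non-syncing components stuttering) is precisely a transition allowed by Definition~\ref{def:composed_ha}, so $\rho'$ is a valid step path of $\H$.

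Next I would verify that $\rho'$ represents $\rho$ in the sense used in the paper: the interleaving compositional paths succinctly represented by $\rho'$ are exactly the distinct interleavings of the transitions occurring in its steps; since each step of $\rho'$ contains a single transition, there is only one such interleaving, namely $\rho$ itself. Invoking Proposition~\ref{prop1}, the length of this (unique) interleaving path is $\sum_{i=1}^{l} |S_i| = \sum_{i=1}^l 1 = l$, consistent with $\rho$ having length $l$. Thus $\rho'$ is a step path of length exactly $l$ representing $\rho$, which already establishes the claim with the bound "at most $l$".

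Finally, to make the "at most $l$" meaningful (and to connect with the motivating examples where a step path is genuinely shorter), I would remark that whenever two consecutive transitions $\omega^i, \omega^{i+1}$ along $\rho$ are independent — e.g. one is a local transition of component $\H_a$ and the next a local transition of a different component $\H_b$, or more generally they act on disjoint sets of components — the two corresponding singleton steps may be merged into one step with label set $\{\omega^i, \omega^{i+1}\}$ without changing the set of reachable location tuples, yielding a shorter step path that still represents $\rho$ (by Proposition~\ref{prop1} its represented interleavings all have the original length $l$, and $\rho$ is among them). Iterating such merges can only decrease the length, so in every case there is a step path of length at most $l$ representing $\rho$. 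The one point requiring care — the main (minor) obstacle — is checking that a merged step is still admissible in the step semantics: this needs the observation that the component automata have pairwise disjoint variable sets and that non-syncing components can always stutter, so simultaneously firing transitions that touch disjoint component sets is legal and the guards/resets act independently; the single-transition-per-step construction sidesteps this entirely and suffices for the stated bound.
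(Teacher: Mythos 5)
Your proof is correct, but it takes a different route from the paper's. You observe that interleaving semantics is a degenerate case of step semantics: taking each step to be the singleton $S_i=\{\omega^i\}$ gives a legal step path of length exactly $l$ whose only represented interleaving is $\rho$ itself, which already meets the bound ``at most $l$.'' The paper instead constructs a genuinely compressed witness: it groups consecutive local (unshared) transitions of $\rho$ into a single step, obtaining a step path whose length can be strictly smaller than $l$. Both arguments prove the stated proposition, and yours is arguably the more economical one; your closing remark about merging independent consecutive transitions (with the admissibility check via disjoint variable sets and stuttering of non-participating components) essentially recovers the paper's construction as an optional refinement rather than as the core of the proof. The main thing the paper's grouped construction buys is that it matches what the step-path enumeration actually produces: the encoding in Algorithm~\ref{algo:path_enum} is used together with pruning constraints (e.g., $\phi_{R\_waiting}$, which forbids stutters before local transitions), under which the trivial singleton-step witness---full of stutters while other components wait---may be excluded from enumeration, whereas a witness in which locals are packed into steps is the one the algorithm is designed to find. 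So for Proposition~\ref{prop2} in isolation your argument is complete, but for its downstream use in Claim~\ref{claim_1} the paper's choice of witness is the more robust one to carry forward.
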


\begin{proof}
    We construct a step compositional path $\rho'$ that represents $\rho$ by grouping consecutive local (unshared) transitions of $\rho$ into a single step transition. This yields a finite sequence of steps $S_1, S_2, \dots, S_l$, where each $S_i$ is a set of local non-stutter transitions that could be executed in a step. Since one or more local transitions are merged in a single step, the length of $\rho'$ is bounded by the length of $\rho$.
\end{proof}

\noindent We now present an encoding of a propositional logic formula $\phi^k_{\H}$ from graph structures $G_{\H_1}$($\V_1$, $E_1$), $G_{H_2}(\V_2,E_2)$, $\dots$, $G_{\H_m}(\V_m,E_m)$ associated with the component AHAs. The encoding is such that the satisfiability of $\phi^k_{\H}$ implies the existence of a step compositional path $\langle \rho_1, \rho_2, \dots, \rho_m \rangle$ starting from $v^{init}$ and ending at $v^{unsafe}$ respectively. A satisfiable assignment of the variables in $\phi^k_{\H}$ can be decoded back to obtain a step compositional path $\rho_{step}$. We describe the encoding of $\phi^k_{\H}$ using five primary constraints, namely \emph{initial constraint}, \emph{exclusivity constraint}, \emph{transition constraint}, \emph{shared label constraint} and \emph{destination constraint}, which are denoted as $\phi_{init}, \phi_{excl}, \phi_{trans}$, $\phi_{shared}$ and $\phi_{dest}$ respectively. The encoding scheme is motivated by \cite{10.1007/978-3-030-94583-1_23} with our proposed optimizations. We illustrate below the constraints of a component, say $\H_c$. These constraints are encoded using boolean variables $t_c^j$, $stutter_c^j$ and $sh\_label_{\omega}^j$ for $1 \leq j \leq k$, $1 \leq c \leq m $ and $\omega \in \Sigma_{sh}$, $\Sigma_{sh}$ being the set of shared labels. The truth of $t_c^j$ represents that the transition $t_c$ of $\H_c$ is enabled in step $j$ of the path. Similarly, the truth of $stutter_c^j$ represents that the transition in step $j$ of the component $\H_c$ is a stutter transition. The truth of $sh\_label_{\omega}^j$ implies that transitions with a shared label $\omega$ have been taken in the compositional path at step $j$. 

\noindent \textbf{Initial constraint.} The \emph{initial constraint} for a component $\H_c$ is encoded as:
\begin{equation*}
\phi_{init}(\H_c) = \bigvee_{t_c \in T_{out}(v_c^{init})} t_c^1
\end{equation*}
This constraint enforces that a path of $\H_c$ in the composition must start with one of the outgoing transitions of the initial location. $T_{out}$ of a location returns the set of outgoing transitions from that location, including a stutter transition.

\noindent \textbf{Transition constraint.} The \emph{transition constraint} ensures that any path respects the transition relations. If a transition $t_i$ is taken at step $j$, then the next transition in the path at step $j+1$ must be one of the outgoing transitions of the destination location of $t_i$. We use two functions, $L(t_c)$ and $Reach_c(j)$, to generate the constraint. $L(t_c)$ returns the destination location of a transition $t_c \in E_c$. When $t_c$ is a stutter transition, $L(t_c)$ returns the destination location of the first preceding non-stutter transition in the path, or it returns $v_c^{init}$ when all preceding transitions are stutter. $Reach_c(j)$ returns the set of transitions which are $j$ steps away from $v_c^{init}$ in $G_{\H_c}(\V_c,E_c)$. Hence, the non-stutter transition in $step$ $j$ of a path must be one in $Reach_c(j)$. The use of this function is to shorten the formula size since $|Reach_c(j)| \leq |E_c|$.

\begin{equation}
\phi_{\text{trans}}(\mathcal{H}_c, j) = 
\bigwedge_{t_c \in \text{Reach}_c(j) \cup \{\text{stutter}\}} 
\Biggl(
t_c^j \implies \\
\bigvee_{t_{c'} \in T_{\text{out}}(L(t_c))} 
t_{c'}^{j+1}
\Biggr)
\end{equation}

\noindent \textbf{Exclusivity constraint.} The \emph{exclusivity constraint} states that if a transition is taken at a step $j$, then no other transition can be taken at the same step.
\begin{equation}
\phi_{\text{excl}}(\mathcal{H}_c, j) = 
\bigwedge_{t_c \in \text{Reach}_c(j)} 
\Biggl(
t_c^j \implies \neg \biggl( \\
\bigvee_{t_{c'} \in (\text{Reach}_c(j) - \{t_c\}}) 
t_{c'}^j
\biggr)
\Biggr)
\end{equation}

\noindent \textbf{Destination constraint.} This encodes that at depth $j$, the transition must be an incoming transition to the forbidden location. In the encoding, $T_{in}(v)$ is a function that returns the incoming transitions of a location $v$. When no incoming transition to the unsafe location is reachable in step $j$, the destination constraint becomes an empty clause, which is interpreted as false.

$$\phi_{dest}(\H_c, j) = \bigvee_{t_{c}\in T_{in}\big((v_c^{unsafe}) \cap Reach(j)\big)}{t_{c}^j} $$

\noindent Now, combining the above constraints, we get the following encoding of a $k$-bounded unfolding of $G_{\H_c}(\V_c,E_c)$:

\begin{equation}
\phi^k_{\mathcal{H}_c} = 
\phi_{\text{init}}(\mathcal{H}_c) \land 
\bigwedge_{1 \leq j \leq k-1} \phi_{\text{trans}}(\mathcal{H}_c, j) \land \\
\bigwedge_{1 \leq j \leq k} \phi_{\text{excl}}(\mathcal{H}_c, j) \land 
\bigvee_{1 \leq j \leq k} \phi_{\text{dest}}(\mathcal{H}_c, j)
\end{equation}

\noindent We now present the encoding to combine the constraints of each member in the composition. We first present the shared label constraint:

\noindent \textbf{Shared label constraint.} To synchronize shared transitions across components, we have the following constraint to track an occurrence of a shared transition in any step $j$ using the boolean variable $sh\_label_{\omega}^j$. The function $label(t)$ returns the label of a transition $t$.
\begin{equation}
\phi_{\text{shared}}(\mathcal{H}_c, j) = 
\bigwedge_{\{t_c \in E_c \,|\, \text{label}(t_c) \in \Sigma_{\text{sh}} \}}
\Big( t_c^j \iff \\
\text{sh\_label}_{\text{label}(t_c)}^j \Big)
\end{equation}

\noindent \textbf{Exclusive shared constraint.} In addition, we have to forbid transitions with distinct shared labels to be taken at the same step. The \emph{exclusive shared} constraint makes that happen:

\begin{equation}
\phi_{\text{excl\_shared}}(j) = 
\bigwedge_{\omega \in \Sigma_{\text{sh}}} 
\Big( 
\text{sh\_label}_{\omega}^j \implies \\
\bigwedge_{\omega' \in (\Sigma_{\text{sh}} - \{\omega\})} 
\lnot \text{sh\_label}_{\omega'}^j 
\Big)
\end{equation}

\noindent \textbf{Synchronization constraint.} The primary concern is to respect the synchronization semantics for which we introduce the \emph{synchronization constraint} denoted by $\phi_{sync}$. This constraint ensures that a shared transition must be true simultaneously in all its members at any step in a compositional path. The synchronization constraint is given as:
\begin{equation}
\phi_{\text{sync}}^j = 
\bigwedge_{\omega \in \Sigma_{\text{sh}}} 
\Biggl( 
\text{sh\_label}_{\omega}^j \implies 
\bigwedge_{1 \leq c \leq m} \\
\Big( 
\bigvee_{t_c \in \{t_c \in E_c \,|\, \text{label}(t_c) = \omega\}} t_c^j 
\Big) 
\Biggr)
\end{equation}

\noindent Joining the member encoding together with shared label constraint, exclusive shared constraint, and synchronization constraint, the $k$ bounded unfolding of the graph of a compositional hybrid automaton $\mathcal{H} = \H_1 \parallel \H_2 \parallel \ldots \parallel \H_m$ is represented as:
\begin{equation}\label{eq:2}
    \phi^k_{\mathcal{H}} = {\bigwedge_{1 \leq c \leq m} \Big( {\phi^k_{\H_c}}} \land \bigwedge_{1 \leq j \leq k}  \phi_{shared}(\H_c,j)  \Big) \land \\
    \bigwedge_{1\leq j \leq k} \phi_{excl\_shared}(j) \land \bigwedge_{1\leq j \leq k}{\phi_{sync}^j}  
\end{equation}
\noindent There are a maximum of $k.|E_c|$ transition variables, $k$ stutter variables and $|E_c|$ shared label variables in $\phi^k_{\H_c}$. In $\phi^k_{\mathcal{H}}$, the number of variables is $O(k.m.|E|)$, where $|E| = \sum_{c=1}^m |E_c|$. The length of $\phi^k_{\mathcal{H}}$ is $O (k. \sum_{c=1}^m |E_c|^2)$.

\noindent Algorithm \ref{algo:path_enum} shows the enumeration of the step compositional paths, which retrieves step compositional paths of length $\leq k$. By Proposition \ref{prop1}, since all interleaving paths represented by a step compositional path have the same length, the algorithm safely discards paths whose corresponding interleaving paths are longer than $k$, for bounded reachability analysis.

\begin{algorithm}
\caption{Step path enumeration.}
    \scriptsize
    \begin{algorithmic}[1]
        \Require $m$ affine hybrid automata $\H_1, \H_2, \dots, \H_m$ along with $\I$ and $\F$, and bound of analysis $k$.
        \Ensure Unknown or Safe.
        \For{$l = 1$ to $k$}
            \State Construct the formula $\phi^l_{\mathcal{H}}$.\Comment{defined in equation \ref{eq:2}.}
            \State valid\_path = $False$
            \State path\_found = $False$
                \While{$\phi^l_{\mathcal{H}}$ is SAT}
                    \State Find a step path $P = S_1,S_2,\dots,S_l$ 
                    \State path\_found = $True$
                    \If{$\sum_{i=1}^{l}|S_i| \leq k$}
                        \State valid\_path = $True$
                        \State infeasible, $\rho'_{inf} = RA(P, \I,\F)$\Comment{See section \ref{postC_postD}.}
                        \If{infeasible is $False$}
                            \State return Unknown.
                        \Else
                            \State Update $\phi^l_{\mathcal{H}}$ by $\rho'_{inf}$.\Comment{See section \ref{prune_section}}
                        \EndIf
                    \Else
                        \State Update $\phi^l_{\mathcal{H}}$ by $P$.\Comment{See section \ref{prune_section}}
                    \EndIf
                \EndWhile
            \If{(valid\_path is $False$ and path\_found is $True$)}
               \State return Safe. 
            \EndIf
        \EndFor
       \State \Return Safe
    \end{algorithmic}
    \label{algo:path_enum}
\end{algorithm}

\noindent In order to later prove the correctness of the proposed bounded reachability analysis algorithm, we now show that no interleaving path of length bounded by $k$ is skipped in the path enumeration.

\begin{claim}\label{claim_1}
The enumerated step compositional paths in Algorithm \ref{algo:path_enum} represent all interleaving compositional paths of length bounded by $k$.
\end{claim}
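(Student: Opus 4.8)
The plan is to argue in two directions, combining Proposition~\ref{prop2} with a faithfulness property of the encoding $\phi^k_{\mathcal H}$. First I would fix an arbitrary interleaving compositional path $\rho$ of length $\ell \le k$ from $v^{init}$ to $v^{unsafe}$ in $\mathcal H$. By Proposition~\ref{prop2} there is a step compositional path $\rho'$ of some length $l \le \ell \le k$ that represents $\rho$; let $S_1,\dots,S_l$ be its steps. The first task is to verify that this particular $\rho'$ corresponds to a satisfying assignment of $\phi^l_{\mathcal H}$: set $t_c^j$ true exactly when the label of $t_c$ occurs in step $S_j$ of the projection $\rho'_c$ (and $stutter_c^j$ true when $\rho'_c$ stutters at step $j$), set $sh\_label_\omega^j$ true when $\omega$ is the shared label used at step $j$. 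Then I would check clause-by-clause that $\phi_{init}$, $\phi_{trans}$, $\phi_{excl}$, $\phi_{dest}$, $\phi_{shared}$, $\phi_{excl\_shared}$, and $\phi_{sync}$ are all satisfied — this is routine because each constraint is a direct syntactic transcription of the definition of a step compositional path and of the synchronization semantics (a shared transition fires in all its components or none; distinct shared labels are not concurrent; exactly one non-stutter transition per component per step; the $Reach_c(j)$ restriction holds because the $j$-th non-stutter transition of a component is genuinely $j$ non-stutter steps from $v_c^{init}$, accounting for the $L(\cdot)$ bookkeeping through stutter prefixes).

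Next I would argue that Algorithm~\ref{algo:path_enum} does not lose $\rho'$. The outer loop runs $l$ from $1$ to $k$, so the relevant length $l$ is reached. For that $l$, the inner \textbf{while} loop enumerates satisfying assignments of $\phi^l_{\mathcal H}$, each time either (i) returning \textsf{Unknown} (in which case the statement is about the paths enumerated so far, and we are done since $\rho'$ being enumerated or a witnessing unsafe path having been found both vindicate the claim — more precisely, the claim only needs to cover the case where enumeration runs to completion, or I restate it to that effect), or (ii) excluding from $\phi^l_{\mathcal H}$ only the specific path $P$ just examined, via the update in Section~\ref{prune_section} (when $P$ has an over-long interleaving image, safe by Proposition~\ref{prop1}, or when the refinement phase $RA$ certifies infeasibility, in which case $\rho'_{inf}$ is a sub-path common to $P$ and all paths pruned with it). The key point to nail down is that every pruning step removes only step compositional paths all of whose represented interleaving paths are either longer than $k$ or continuously infeasible; hence no step path representing a length-$\le k$ interleaving path that could be a genuine counterexample is ever removed before being examined. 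Therefore the assignment encoding $\rho'$ is eventually produced (or was rendered redundant by an earlier return), so $\rho'$ — and with it $\rho$ — is among the enumerated step compositional paths.

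The main obstacle, I expect, is the pruning/soundness bookkeeping in step two: showing that the "Update $\phi^l_{\mathcal H}$ by $\rho'_{inf}$" operation (Section~\ref{prune_section}) excludes exactly a set of step paths that is safe to exclude, i.e. that it never blocks the assignment corresponding to our target $\rho'$ unless $\rho'$'s continuous behaviour is genuinely infeasible or over-long. This hinges on $\rho'_{inf}$ being a genuine infeasible fragment returned by the symbolic reachability analysis $RA$, and on the encoding update being a clause that forbids only paths containing that fragment. A secondary, more clerical obstacle is the careful handling of the $Reach_c(j)$ sets and the $L(t_c)$ function for stutter transitions when checking that the encoding of $\rho'$ satisfies $\phi_{trans}$: one must confirm that grouping local transitions into steps (as in Proposition~\ref{prop2}) keeps every component's non-stutter-transition count at step $j$ consistent with "$j$ non-stutter steps from the initial location," which is exactly what $Reach_c(j)$ and the stutter-aware $L(\cdot)$ track.
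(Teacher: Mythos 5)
Your proposal is correct and takes essentially the same route as the paper's own proof: invoke Proposition~\ref{prop2} to obtain, for any interleaving path of length $\le k$, a representing step path of length at most $k$, and then argue that the per-level exhaustive SAT enumeration (with overlong step paths discarded) must produce it. The only difference is one of detail, not of approach: you make explicit two points the paper's short argument takes for granted, namely that every such step path induces a satisfying assignment of $\phi^l_{\mathcal{H}}$, and that the path-negation/pruning updates and early returns never eliminate a relevant step path before it is examined.
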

\begin{proof}
    Algorithm \ref{algo:path_enum} enumerates all valid step compositional paths of length up to $k$. For all interleaving paths $\rho$ of length $l \leq k$, proposition \ref{prop2} states that there exists a step path $\rho'$ of length bounded by $l$ that represents $\rho$. Since $l \leq k$, Algorithm \ref{algo:path_enum} will enumerate such a $\rho'$. In addition, Algorithm \ref{algo:path_enum} discards the step paths that correspond to interleaving paths larger than $k$. 
\end{proof}

\subsection{Conversion to Shallow Compositional Path and Reachability Analysis}\label{postC_postD}

The next step in the algorithm is a reachability analysis of the enumerated paths in Algorithm \ref{algo:path_enum}. Before reachability analysis of a step path, it is converted into a shallow path. We now discuss the motivation for this conversion.  

\paragraph{Conversion to Shallow Path} Many step paths represent the same set of interleaving paths, due to the presence of stutter transitions. For example, the two step paths shown in Fig. \ref{fig:steppaths} represent the same set of interleaving paths shown in Table \ref{tab:inpaths}. This leads to redundancy in computation due to repeated reachability analysis of similar step paths. To address this issue, a step compositional path is converted into a stutter-free shallow path, and reachability analysis is performed on the resulting shallow path only once. Note that step paths that are stutter variants map to a unique shallow path. For example, the step compositional path depicted in Figure \ref{fig:path-step1} can be obtained by altering the order of labels $l_{3}$ and $Stutter$ in Fig~\ref{fig:path-intv1} and vice-versa. These two step compositional paths are shallowly consistent, meaning that their corresponding stutter-free shallow path is the one depicted in Fig~\ref{fig:shallowpath}. Thus, converting a step path into a shallow path and consequently performing a reachability analysis on the converted shallow path saves redundant reachability analysis. 

\paragraph{Reachability Analysis}
An interleaving path in a composition can be interpreted as an abstract representation of all runs of the composition that follow the same sequence of locations and transitions as in the path. We formalize this relation between a path and a run of a composition as follows.  

\begin{definition} A run $\sigma$ of the automaton $\H$ is said to be \emph{consistent} with an interleaving compositional path $\rho$ of $\H$ if and only if the sequence of locations and transitions observed in $\sigma$ (with consecutive locations in a timed transition treated as a single location) is exactly the same as those appearing in $\rho$. We denote this relation as $\sigma \sim \rho$.
\end{definition}

\noindent The reachability relation between a state $(v,z)$ and a run $\sigma$ of $\H$ is given below. 

\begin{definition}
A state $(v,z)$ of $\H$ is said to be reachable by a run $\sigma$ of $\H$ if and only if there exists a timed transition $(v,x) \xrightarrow{\tau} (v,y)$ in $\sigma$ and a time $t \in [0,\tau]$ such that $(v,x) \xrightarrow{t} (v,z)$ is also a timed transition of $\H$. 
\end{definition}

\noindent The above definition essentially relates the states on the timed trajectory of a run to be reachable by the run. We now define the set of reachable states for an interleaving compositional path $\rho$.

\begin{definition} \cite{10.1145/3567425}
Given an interleaving compositional path $\rho$ of the automaton $\H$, $\mathcal{R}_{\rho}$ is the set of states reachable by all \emph{runs} $\sigma$ of $\mathcal{H}$ such that $\sigma \sim \rho$.  
\end{definition}

\noindent Recall from the discussion of Section~\ref{motivating_exam} that a shallow compositional path $\rho_{shallow}$ represents one or more interleaving compositional paths. We use the notation $\mathcal{R}_{shallow}$ to represent the union of $\mathcal{R}_{\rho}$ on all interleaving paths $\rho$ represented by $\rho_{\rho_{shallow}}$. To compute the reachable states of $\rho_{shallow}$, we systematically explore the computation tree of $\rho_{shallow}$ in a breadth-first manner and compute the reachable states for all interleaving paths that it represents. The union of these reachable states constitutes the set of reachable states of $\rho_{shallow}$.

\noindent Once $\rho_{shallow}$ is obtained from $\rho_{step}$, the next step of our algorithm is to compute the reachable states $R_{\rho_{shallow}}$ of $\rho_{shallow}$. The reachability analysis decides the infeasibility of $\rho_{shallow}$ defined as follows:

\begin{definition}
A path $\rho_{shallow}$ of a compositional hybrid automaton $\mathcal{H}$ is called \emph{infeasible} if and only if for all interleaving paths $\rho$ represented by $\rho_{shallow}$, there is no run $\sigma$ of $\mathcal{H}$ such that $\sigma \sim \rho$.
\end{definition}

\begin{wrapfigure}{r}{0.35\textwidth}
    \centering
    \includegraphics[scale=0.4]{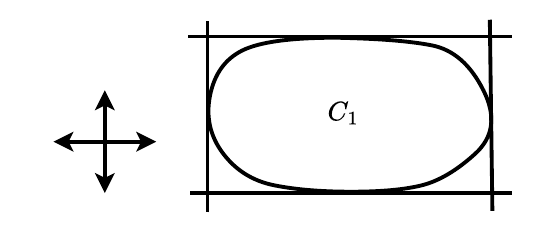}
    \caption{The rectangular region is the template polyhedron approximation of the convex set $C_1$.}
    \label{fig:tem_polyhedron}
\end{wrapfigure}

\paragraph{Support Function Representation}
In order to compute reachable states, we perform a symbolic state-space exploration similar to \cite{10.1145/3567425}, now over a compositional path. The operations are on a \emph{symbolic state} which is a finite representation of potentially infinitely many states of a hybrid automaton. A \emph{symbolic state} of the composition $\H = \H_1 \parallel \H_2 \parallel ...\parallel \H_m$ is given by ($\textbf{V},\textbf{C}$), where $\textbf{V} \in \V_1 \times \V_2 \ldots \times \V_m$ and $\textbf{C} \subseteq \Inv(\textbf{V}) \subseteq  \mathbb{R}^{(|\X_1| + |\X_2|+ \dots + |\X_m|)} $. The continuous set $\textbf{C}$ is stored as a union of convex sets represented by their \emph{support functions}. The support function of a convex set $\Omega \subseteq \mathcal{R}^n$ is defined as $h_{\Omega}:\mathbb{R}^n \to \mathbb{R}$, where $h_{\Omega}(\ell) = \underset{x \in \Omega}{\text{max }} \ell \cdot x$. The algorithm uses the $postC$ operator to compute all timed successors (say $\textbf{V},\textbf{C}^{flow}$) of a given symbolic state, where $\textbf{C}^{flow}$ is a union of bounded convex sets $\Omega_0, \Omega_1,\dots,\Omega_k$ such that each $\Omega_i \in \Inv(\textbf{V})$. Each $\Omega_i$ represents an over-approximation of reachable states within the time interval [$i\Delta t, i\Delta t + \Delta t$], where $i \leq \lceil \frac{T}{\Delta t} \rceil - 1$, $T$ is the time horizon, and $\Delta t$ denotes the time step. The computation of $\Omega_i$ is illustrated in \cite{10.1145/3567425}. The intersection of continuous sets is a frequent operation in reachability analysis, and it is computationally inefficient for convex sets represented by their support functions.  Intersection operation between convex sets is cheap when represented as \emph{template polyhedra} instead. Therefore, each $\Omega_i$ in $\textbf{C}^{flow}$ is converted to a template polyhedron approximation, with a precision trade-off.

\begin{definition}[Template Polyhedron] \cite{FLGDCRLRGDM11}
    Given a finite set of vectors $D= \{\ell_1,\ell_2,\dots,\ell_d\}$ in $\mathbb{R}^n$ known as template directions, a template polyhedron $P_{D} \subseteq \mathbb{R}^n$ is a polyhedron for which there exist coefficients $b_1,b_2,\dots,b_d \in \mathbb{R}^n$ such that
    $$P_{D} = \{x \in \mathbb{R}^n|\bigwedge_{\ell_i\in D}\ell_i.x\leq b_i\}$$
\end{definition}

\noindent Given template directions $D$, a template polyhedron approximation of a convex set $\Omega$ is defined by 

$$P_{D}(\Omega) = \{x \in \mathbb{R}^n|\bigwedge_{\ell_i\in D}\ell_i.x\leq h_{\Omega}(\ell_i)\}$$

\noindent Figure \ref{fig:tem_polyhedron} shows the template polyhedron approximation of a convex set $C_1 \subset \mathbb{R}^2$ along the \emph{axis} directions as template directions. 

\begin{algorithm}[htp]
    \caption{Reachability analysis over a compositional path $\rho_{shallow}$.}
    \label{Algo:path-wise-RA.}
    \scriptsize
    \begin{algorithmic}[1]
    \Require $\rho_{shallow} = \langle \rho_1,\rho_2,\dots,\rho_m \rangle$, CAHA $\H$, Initial and unsafe configuration $\I=(v^{init},\C^{init})$, $\F = (v^{unsafe}$, $~\C^{unsafe})$.
    \Ensure Unknown when feasible, otherwise returns safe with the infeasible subpath $\rho'_{inf}$.
    \State $R_{\rho_{shallow}} = \emptyset$\Comment{$R_{\rho_{shallow}}$ stores the reachable states of $\rho_{shallow}$.}
    \State $\rho'_1=\rho'_2=\dots\rho'_m = \{\} ; \textbf{Q}= \{\}$\Comment{$\{\}$ denotes empty and \textbf{Q} is the BFS queue.}
    \For{$i = 1$ to $m$}
        \State $\rho'_i.push(v_i^{init})$\Comment{$v^{init}=\langle v_1^{init},v_2^{init},\dots,v_m^{init}\rangle$}
    \EndFor
    \State $\rho'_{inf} = \langle \rho'_1,\rho'_2,\dots,\rho'_m \rangle$\Comment{Monitoring infeasible path segments.}
    \State $\textbf{loc} = \langle v_1^{init}, v_2^{init}, \ldots, v_m^{init}\rangle$ ; $ \textbf{C} = \C^{init}_1 \times \C^{init}_2 \times \ldots \times \C^{init}_m$ \label{initial}\Comment{Create initial symbolic state}
    \State $\textbf{Q}.push\big((\textbf{loc}, \textbf{C})\big)$ \label{queue_define}\Comment{Initially queue stores the initial symbolic state.}
    \While{($\textbf{Q}\neq \emptyset$)}
        \label{while_start}
        \State $(\textbf{loc}, \textbf{C}) = \textbf{Q}.remove()$ 
        \If{($flowpipe\_map[(\textbf{loc}, \textbf{C})]== Null$)}\label{f_cache}\Comment{Checking if the $flowpipe\_map$ has no entry for (\textbf{loc}, \textbf{C}).}
            \State $(\textbf{loc},\textbf{C}^{flow}) = postC\big((\textbf{loc},\textbf{C})\big)$ ; $R_{\rho_{shallow}} = R_{\rho_{shallow}} \cup (loc,\textbf{C}^{flow})$ \label{algo:flowpipe}\Comment{Compute reachable states.} 
            \State $flowpipe\_map[(\textbf{loc}, \textbf{C})]= (\textbf{loc},\textbf{C}^{flow})$\label{flow_store}\Comment{Store reachable states for (\textbf{loc}, \textbf{C}).}
        \Else
            \State $(\textbf{loc},\textbf{C}^{flow}) = flowpipe\_map[(\textbf{loc},\textbf{C})]$\label{entry_flowpipe}\Comment{Retrieved reachable states from the $flowpipe\_map$.}
        \EndIf
        \State $\Sigma_{next} = \emptyset$; $no\_next = true$
        \For{$i = 1$ to $m$}
            \label{algo:next_trans} 
            \If{$\textbf{loc}$[i] has next transition in $\rho_i$ with label $\omega_i$}
               \State $\Sigma_{next} = \Sigma_{next}\cup \omega_i$ ,  $no\_next = false$ \Comment{Find the set of possible next transitions.}
            \EndIf
        \EndFor\label{algo:next_trans_end}
        \If{$no\_next$}\label{algo:reachable}
            \If{$R \cap \C^{unsafe}$}\Comment{Check safety violation.}
               \State \Return unknown with feasible path $\rho_{shallow}$.            
            \Else
                \State \Return safe with $\rho_{shallow}$
            \EndIf
        \EndIf\label{reachable_end}
        \For{$\omega$ in $\Sigma_{next}$}\Comment{Process each transition.} \label{postD_begins}
            \State $\textbf{next\_loc} = \textbf{loc}$ ; $trans = \emptyset$ \Comment{$trans$ is the set of transitions that can be executed simultaneously.}
            \For{$i = 1$ to $m$}\label{next_loc}\Comment{Identify participating components and construct next location.}
                \If{($\omega \notin \Sigma_{sh}$) or ($\omega \in \Sigma_{sh} \land Frq(\omega) == Sh\_comps(\omega)$) and $(\textbf{loc}[i],\omega,v)$ is a transition in $\rho_i$}
                   \State $\textbf{next\_loc}$[i] = $v$ ; $trans \gets trans \cup \delta$\Comment{$\delta = (\omega,\textbf{loc}[i],\G,\Asgn,v)\in Trans$}
                    \State $\rho'_i.push(\omega)$ ;  $\rho'_i.push(v)$
                \EndIf
            \EndFor \label{next_loc_end}
            \If{$trans == \emptyset$}
                \State \textbf{continue} \label{not_compatible}\Comment{Transition with label $\omega$ is shared and cannot be enabled.}
            \Else
                \State $\G, \Asgn = make\_compatible(trans)$\label{compatible}\Comment{Transition(s) with label $\omega$ is enabled and either local or shared.}
            \EndIf
            \If{$successor\_map[\big((\textbf{loc},\textbf{C}^{flow}), \omega\big)]== Null$}\label{successor_check}\Comment{ Checking if $successor\_map$ has no entry for $\big((\textbf{loc},\textbf{C}^{flow}), \omega\big)$.} 
                \State ($\textbf{next\_loc},\textbf{C}^{new}$) = $postD\big((\textbf{loc},\textbf{C}^{flow}),\G,\Asgn,\textbf{next\_loc}\big)$\label{next_SS}
                \If{$\textbf{C}^{new} \neq \emptyset$}
                    \State \textbf{Q}.push\big(($\textbf{next\_loc},\textbf{C}^{new}$)\big) \Comment{Insert the newly computed symbolic state into the queue.}
                    \State $successor\_map[\big((\textbf{loc},\textbf{C}^{flow}), \omega\big)] = (\textbf{next\_loc},\textbf{C}^{new})$\Comment{Store the symbolic state in $successor\_map$.}
                \EndIf
            \Else
                \State $(\textbf{next\_loc},\textbf{C}^{new})= successor\_map[\big((\textbf{loc},\textbf{C}^{flow}), \omega\big)]$\label{reuse_successor}\Comment{Retrieved symbolic state from the $successor$ map.}
                \State \textbf{Q}.push\big(($\textbf{next\_loc},\textbf{C}^{new}$)\big)
            \EndIf
        \EndFor\label{postD_end}
    \EndWhile\label{while_end}
    \State \Return safe with $\rho'_{inf}$
    \end{algorithmic}
\end{algorithm}

\paragraph{Computation Tree Exploration}
\noindent Consider the shallow compositional path $\rho_{shallow}$ depicted in Fig. \ref{fig:shallowpath}. We now discuss the procedure for computing $R_{\rho_{shallow}}$. The initial locations of the three components are merged to form the location ($\langle1,5,8\rangle$) in the composed automaton, where the set of variables $\mathcal{X}$ and the $Inv$, $Init$, and $Flow$ maps of each component are combined as defined in Definition \ref{def:composed_ha}. We compute the reachable states for this location using the $postC$ operator, detailed in \cite{10.1145/3567425}. The computed reachable states are represented by a symbolic state ($\langle1,5,8\rangle$, \textbf{C}) and stored in a data structure as the root of the computation tree. Subsequently, the algorithm examines the possible transitions that can be activated to advance to the next location in the composed automaton. The transition dynamics is obtained by merging the transition dynamics of the components according to the rules specified in Definition \ref{def:composed_ha}. 
From the initial location $\langle1,5,8\rangle$, there are three outgoing transitions labeled $l_1$, $sh_1$, and $l_3$. The transition with label $sh_1$ is not considered, as it is a shared transition and is not synchronized with the first component. The algorithm therefore identifies two transitions with labels $l_1$ and $l_3$ and constructs the two target locations $\langle2,5,8\rangle$ and $\langle1,5,9\rangle$, respectively.  These transitions create two branches in the computation tree, leading to the computation of two symbolic states ($\langle2,5,8\rangle$, $\textbf{C}_1$) and ($\langle1,5,9\rangle$, $\textbf{C}_2$) using the $postD$ operator, representing the discrete transition successors. The computation tree expands as the algorithm explores reachable states from the newly computed symbolic states. For example, from location $\langle2,5,8\rangle$, two transitions with labels $sh_1$ and $l_3$ construct two new symbolic states ($\langle3,6,8\rangle$,$\textbf{C}_3$) and ($\langle2,5,9\rangle$,$\textbf{C}_4$) using $postC$ followed by $postD$ operations. The computation tree of the shallow compositional path is illustrated in Fig. \ref{fig:computation_tree}. This iterative process computes the reachable state-space for a shallow compositional path.  If the reachable states intersect with the unsafe region, our algorithm terminates by showing that the safety specification is violated. Otherwise, the algorithm returns an infeasible subpath and continues processing the next path for further analysis. 

\begin{algorithm}[htp]
    \caption{$postD((\textbf{loc},\textbf{C}^{flow}),\G,\Asgn,\textbf{next\_loc})$}
    \label{postD_algo2}
\begin{algorithmic}[1]
    \footnotesize
    \Require A symbolic state $(\textbf{loc},\textbf{C}^{flow})$, and $\G, \Asgn$ of a transition from $\textbf{loc}$ to $\textbf{next\_loc}$.
    \Ensure A new symbolic state ($\textbf{next\_loc},\textbf{C}^{new}$).
    \State $\C' = \emptyset$ ; $\C'' = \emptyset$
    \For{$\Omega$ in $\textbf{C}^{flow}$}\label{postDstart}
        \State $\C' \gets \C' \cup \big(\Omega \wedge \G\big)$\label{g_intersect}\Comment{Set satisfying the guard $\G$.}
    \EndFor\label{g_inter_end}
    \For{$\Omega$ in $\C'$}\label{intersected_region}
        \State $\Omega^* = \Omega.map(\Asgn)$\label{applyingreset}\Comment{Elements of $\Omega^*$ and $\Omega$ satisfy $\Asgn$.}
        \State $\C'' \gets \C'' \cup \big( \Omega^* \wedge \Inv(\textbf{next\_loc})\big)$\label{Inter_inv}
    \EndFor
    \State $\textbf{C}^{new}= \underset{\ell\in D} {\bigwedge}\ell\cdot x \leq \underset{\Omega\in\C''}{\text{max}}(h_{\Omega}(\ell))$ \label{postD}\Comment{Convert to template polyhedron.}
    \State \Return ($\textbf{next\_loc},\textbf{C}^{new}$)
\end{algorithmic}
\end{algorithm}

\begin{wrapfigure}[23]{r}{0.4\textwidth}
    \centering
    \includegraphics[scale=0.35]{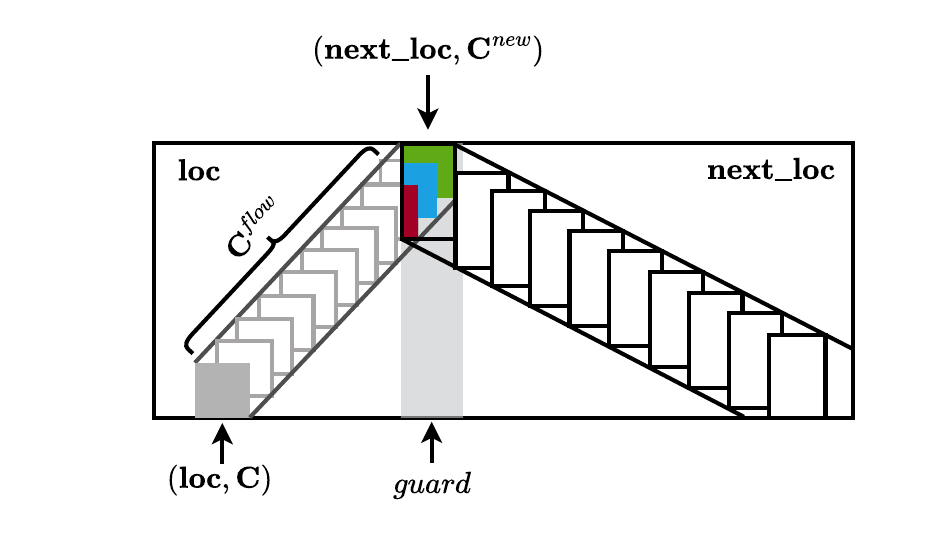}
    \caption{Illustration of the reachable states computation. ($\textbf{loc},\textbf{C}$) is the initial symbolic state, where $\textbf{C}$ is the initial set. The initial set $\textbf{C}$ evolves over time according to the $Flow$ of $\textbf{loc}$, resulting in new reachable states $\textbf{C}^{flow}$. The polyhedra in $\textbf{C}^{flow}$ are intersected with the guard $\G$, and the resulting regions (shown in different colors) satisfy both the reset assignment $\Asgn$ and the $\Inv$ of the location $\textbf{next\_loc}$. These intersected regions are then combined into a single over-approximation convex set $\C^{new}$ using template hull aggregation to form a new symbolic state $(\textbf{next\_loc}, \textbf{C}^{new})$.}
    \label{fig:postC_demo}
\end{wrapfigure}

\noindent The procedure of computing the reachable state space in $\rho_{shallow}$ is shown in Algorithm \ref{Algo:path-wise-RA.}. Line \ref{initial} builds the initial symbolic state $(\textbf{loc},\textbf{C})$ of the composed automata from the safety specification. Line \ref{queue_define} defines a queue $\textbf{Q}$ that contains the initial symbolic state. We explore the symbolic states in a breadth-first fashion with the help of this queue. The while loop in lines \ref{while_start} - \ref{while_end} finds the next reachable symbolic states and pushes them into the queue. The loop is executed until the queue is empty. The $flowpipe\_map$ is a data structure that contains the reachable symbolic state for $(\textbf{loc},\textbf{C})$. Line \ref{f_cache} checks the $flowpipe\_map$ to determine whether the reachable states for the symbolic state $(\textbf{loc},\textbf{C})$ are computed. If not, line \ref{algo:flowpipe} computes the set of reachable states of the current symbolic state using the $postC$ operation and stores it in $R_{\rho_{shallow}}$. Additionally, the reachable states are stored in the $flowpipe\_map$ for reuse, stated in line \ref{flow_store}. When the $flowpipe\_map$ has an entry for the key $(\textbf{loc},\textbf{C})$, the reachable states are retrieved from the map, defined in line \ref{entry_flowpipe}. A set of labels $\Sigma_{next}$ of the possible outgoing transitions is constructed from $\textbf{loc}$, depicted in lines \ref{algo:next_trans} to \ref{algo:next_trans_end}. In lines \ref{algo:reachable} to \ref{reachable_end}, the algorithm terminates by concluding $\H$ as unknown if the reachable states $R_{\rho_{shallow}}$ intersect with $\C^{unsafe}$, otherwise it returns safe and the path $\rho_{shallow}$ as infeasible. The lines \ref{postD_begins} - \ref{postD_end} find the next symbolic states $(\textbf{next\_loc},\textbf{C}^{new})$ for every $\omega$ in the set $\Sigma_{next}$ and push them to the queue $\textbf{Q}$. The next location in the composed automata for the transition(s) with label $\omega$ is constructed in lines \ref{next_loc} - \ref{next_loc_end}. We use the function $Sh\_comps(\omega)$ that finds the number of components where $\omega$ is shared and $Frq(\omega)$ that returns the frequency of the label $\omega$ in $\Sigma_{next}$. When the transition with label $\omega$ is identified as a shared transition, but $Frq(\omega)$ does not equal $Sh\_comps(\omega)$, it indicates that the transitions for $\omega$ are not fully synchronized and therefore cannot proceed. In such cases, the algorithm refrains from computing the next symbolic state for $\omega$ and proceeds to evaluate the subsequent transition, as defined in line \ref{not_compatible}. If the $\omega$ is a label of a local transition, then the size of the set $trans$ must be $1$. The function $make\_compatible()$ depicted in line \ref{compatible} ensures the guard $\G$ and reset assignment $\Asgn$ of the transitions in $trans$ are compatible with the composed automata as per Definition \ref{def:composed_ha}. The $successor\_map$ is a data structure that contains the successor symbolic state for the pair $\big((\textbf{loc},\textbf{C}^{flow}), \omega\big)$. Line \ref{successor_check} checks whether the new symbolic state is already computed for the $\omega$ from the current symbolic state ($\textbf{loc},\textbf{C}^{flow}$). If not, line \ref{next_SS} computes the new symbolic state ($\textbf{next\_loc},\textbf{C}^{new}$) using $postD$ operation, described in Algorithm \ref{postD_algo2}. If $\textbf{C}^{new}$ is empty, it indicates that the algorithm has encountered an infeasible segment of the path. In such cases, the algorithm proceeds to consider other discrete transitions. Otherwise, it stores the new symbolic state ($\textbf{next\_loc},\textbf{C}^{new}$) in the queue $\textbf{Q}$ and the $successor\_map$. If the $successor\_map$ has an entry for the key $\big((\textbf{loc},\textbf{C}^{flow}), \omega\big)$, the new symbolic state ($\textbf{next\_loc},\textbf{C}^{new}$) is already computed. Line \ref{reuse_successor} retrieves the symbolic state from the $successor\_map$.

\begin{wrapfigure}{r}{0.3\textwidth}
     \centering
    \includegraphics[scale=0.5]{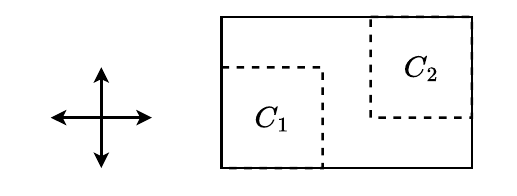}
    \caption{The outer rectangle region is the computed template hull aggregation from two convex sets $C_1$ and $C_2$, along the \emph{axis} directions as template directions.}
    \label{fig:tem_poly}
\end{wrapfigure}

\noindent Algorithm \ref{postD_algo2} computes the next symbolic state for the transition(s) in $trans$. The lines \ref{postDstart} to \ref{postD} compute the successor states due to the discrete transition(s) with label $\omega$. The for loop in lines \ref{postDstart}- \ref{g_inter_end} finds the bounded convex sets $\C'$ after checking the intersection of the template polyhedra $\textbf{C}^{flow}$ with the guard $\G$. Line \ref{intersected_region} iterates over the obtained convex sets that satisfy the guard $\G$. Line \ref{applyingreset} constructs a convex set $\Omega^*$ which is obtained by applying the reset assignment $\Asgn$ to the resultant set, while line \ref{Inter_inv} identifies those sets that satisfy the $Inv$ of the target location $\textbf{next\_loc}$ and are stored in $\C''$. The emptiness of $\C'$ and $\C''$ indicates that the guard $\G$ and reset assignment $\Asgn$ of the transition(s) are not satisfied, respectively. There may be many polyhedra in $\C''$, and computing the $postC$ successors in the target location for each of them can be expensive. Therefore, set-aggregation methods are generally applied. We use the \emph{template hull aggregation} technique. As an illustration, the template hull of two sets $C_1$ and $C_2$ is shown in Fig. \ref{fig:tem_poly}. The line \ref{postD} computes the reachable states $\mathcal{C}^{new}$ for location $\textbf{next\_loc}$ by applying \emph{template hull aggregation} to the sets in $\C''$. Figure \ref{fig:postC_demo} shows an overview of the $postC$ and $postD$ computation.

\subsection{Eliminating Infeasible Paths from Enumeration}\label{prune_section}

\noindent Once we have performed reachability analysis on $\rho_{shallow}$ and find that it is infeasible, our algorithm returns the infeasible subpath $\rho'_{inf}$ of $\rho_{shallow}$. We construct a negation constraint corresponding to the stuttered version (say $\rho'_{step}$) of the sub-path $\rho'_{inf}$ and conjunct it with the bounded graph encoding. Considering the length of the subpath $\rho'_{step}$ is $k$, the path negation constraint is: 
$$\phi_{neg}(\rho_{step'}) = \neg(\bigwedge_{1 \leq c \leq m}\bigwedge_{t_c \in \rho_c, 1 \leq j \leq k}{t^j_{c}})$$

\noindent For every infeasible sub-path identified by reachability analysis, the above negation constraint is conjoined with $\phi_{\H}^k$. The infeasible sub-path always starts from the initial locations.\\
\textbf{Remark:} The path negation constraint additionally discards all paths which extends $\rho'_{inf}$. This is sound and serves as an optimization because extensions of an infeasible path are also infeasible.

\subsection{Optimizations in Path Enumeration}
Additionally, the following two optimizations proposed in \cite{10.1007/978-3-030-94583-1_23} are integrated in the encoding of $\phi^k_{\H}$, namely global waiting and random waiting:

\noindent \textbf{Global waiting.} \emph{global waiting} refers to the situation when all component paths wait simultaneously with a stutter transition. This introduces redundant paths, which we eliminate by introducing the $\phi_{G\_waiting}$ constraint.
$$\phi_{G\_waiting}= \bigwedge_{1\leq j \leq k} \bigvee_{1\leq c \leq m} \big(\neg stutter_c^j\big)$$

\noindent \textbf{Random waiting.} Stutter transitions are intended to keep a component waiting for synchronization with shared transitions and therefore should occur only before a shared transition in a path. However, the encoding allows stutters to occur before location transitions. We therefore include a constraint $\phi_{R\_waiting}$ to eliminate paths where stutter transitions occur before local transitions.
\begin{equation}
    \phi_{R\_waiting}= \bigwedge_{1\leq c \leq m} \bigwedge_{1 \leq j \leq k-1} \Biggl( stutter_c^j \implies \\
    \bigvee_{\{t_c \in E_c |label(t_c) \in (\Sigma - \Sigma_{sh}) \}} \big(\neg t_c^{j+1}\big)\Biggr)
\end{equation}

\subsection{Shared Reachability Analysis}
\begin{wrapfigure}{r}{0.4\textwidth}
    \centering
    \vspace{-25pt}
    \includegraphics[width=1\linewidth]{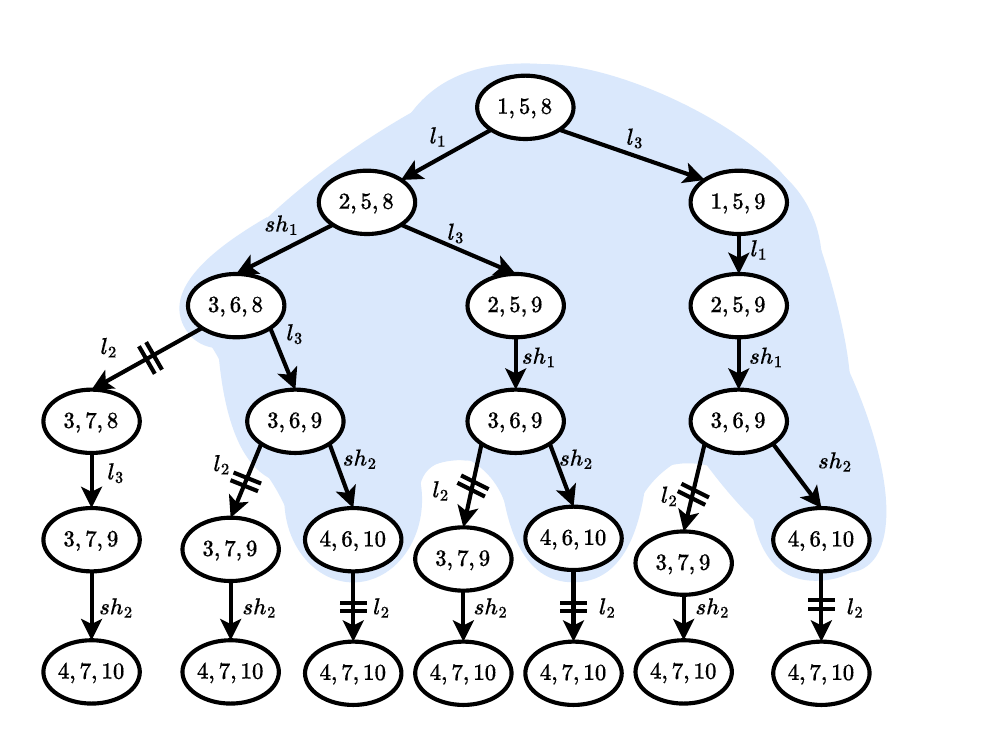}
    \caption{The reachable states of the shaded sub-tree is memoized in data structures during the reachability analysis of $\rho_{shallow}$. This can be subsequently reused during the reachability analysis of  $\rho_{shallow'}$.}
    \vspace{-10pt}
    \label{fig:computation_tree}
\end{wrapfigure}

The computation tree of shallow paths may often have common sub-trees. Computing one can therefore expedite the computation of the other. We propose a caching technique that memoizes reachable states in data structures, enabling their reuse for subsequent paths, thus reducing redundant computations. As a motivational example, consider the safety specification discussed in Section \ref{motivating_exam}. Our algorithm identifies two shallow compositional paths $\rho_{shallow}$ and $\rho_{shallow'}$ depicted in figure \ref{fig:shallowpath} and \ref{fig:shallowpath2}, respectively, for the bound of analysis $5$. Assume that one of these shallow paths, say $\rho_{shallow}$, is considered for reachability analysis, and found to be infeasible because the \emph{guard} of the transition with label $l_2$ is not satisfied. The algorithm then proceeds to the next path $\rho_{shallow'}$ for its reachability analysis. 

\begin{wrapfigure}[9]{r}{0.35\textwidth}
    \centering
    \vspace{-18pt}
    \includegraphics[width=1\linewidth]{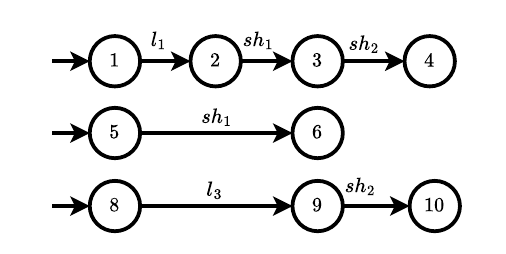}
    \caption{A shallow compositional subpath $\rho_{sub}$ of paths $\rho_{shallow}$ and $\rho_{shallow'}$. }
    \label{fig:subpath}
\end{wrapfigure}

\noindent During the analysis of $\rho_{shallow'}$, the reachable states computed for $\rho_{shallow}$, particularly those corresponding to the common sub-path depicted in Fig. \ref{fig:subpath}, can be reused. Figure \ref{fig:computation_tree} shows the computation tree for the paths $\rho_{shallow}$ and $\rho_{shallow'}$. The shaded part highlights the common sub-tree that can be reused during the reachability analysis of the path $\rho_{shallow'}$.

\noindent To efficiently reuse the reachable states and avoid redundancy in computation, we used two map data structures, namely \emph{flowpipe} and \emph{successor}. 
\begin{itemize}
    \item The \emph{flowpipe\_map} stores the reachable states $(\textbf{loc},\textbf{C}^{flow}) = postC(\textbf{loc},\textbf{C})$ from a symbolic state (\textbf{loc},\textbf{C}).
    \item The \emph{successor\_map} stores the successor symbolic state ($\textbf{next\_loc},\textbf{C}^{new}$) for a pair \big($(\textbf{loc},\textbf{C}^{flow}),$ $\omega$\big), where $\omega$ is a label of a transition between $\langle\textbf{loc},\textbf{next\_loc}\rangle$ in the composed automata.
\end{itemize}

\begin{claim}
    Given a compositional hybrid automaton $\mathcal{H}$, a safety specification, and a bound of analysis $k$, the proposed approach correctly determines the safety of the system.
\end{claim}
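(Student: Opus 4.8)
The plan is to establish soundness by decomposing the claim into two assertions that mirror the two possible outcomes of the algorithm: (i) if the algorithm returns \emph{Unknown}, then there is an interleaving path $\rho$ of length $\leq k$ and a run $\sigma \sim \rho$ whose final state lies in $\F$ — i.e., the algorithm never reports a spurious ``Unknown'' without having actually found reachability in the over-approximation; and (ii) if the algorithm returns \emph{Safe}, then no run $\sigma$ of $\H$ of length $\leq k$ has $initial(\sigma) \in \I$ and $final(\sigma) \in \F$. Assertion (i) is the easy direction: the algorithm returns \emph{Unknown} only from line 12 of Algorithm~\ref{algo:path_enum}, which is reached only when $RA(P,\I,\F)$ reports \emph{feasible}, and by the design of Algorithm~\ref{Algo:path-wise-RA.} this happens only when the computed reachable set $R_{\rho_{shallow}}$ intersects $\C^{unsafe}$ at the final location. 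Since every $postC$/$postD$ step computes an \emph{over-approximation} of the true reachable states (a fact established in \cite{10.1145/3567425} and reused here via template polyhedron approximation and template hull aggregation), this does not by itself witness a concrete run — hence the verdict is \emph{Unknown} rather than \emph{Unsafe}, which is exactly what soundness of this outcome requires.

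The substance is in assertion (ii). The plan is: suppose for contradiction there is a run $\sigma$ of $\H$ with length $l \leq k$, $initial(\sigma)\in\I$, $final(\sigma)\in\F$. Every run induces an interleaving compositional path $\rho$ with $\sigma \sim \rho$, and $\rho$ has length $l \leq k$. By Claim~\ref{claim_1}, the step-path enumeration of Algorithm~\ref{algo:path_enum} enumerates some step compositional path $\rho_{step}$ that represents $\rho$ (this is where Propositions~\ref{prop1} and~\ref{prop2} are invoked, ensuring the length filter $\sum_i |S_i| \leq k$ never discards the step path corresponding to $\rho$). Then $\rho_{step}$ is converted to its shallow variant $\rho_{shallow}$, and reachability analysis is run on $\rho_{shallow}$. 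The key lemma to prove here is an \emph{over-approximation invariant}: for the shallow path $\rho_{shallow}$, the set $R_{\rho_{shallow}}$ computed by Algorithm~\ref{Algo:path-wise-RA.} (exploring the computation tree, applying $postC$ then $postD$ along every interleaving) contains every state reachable by every run $\sigma' \sim \rho'$ for every interleaving path $\rho'$ represented by $\rho_{shallow}$ — in particular it contains $final(\sigma)$. Hence the intersection with $\C^{unsafe}$ is nonempty, so $RA$ returns \emph{feasible}, so the algorithm returns \emph{Unknown}, contradicting that it returned \emph{Safe}. To close the loop I must also argue that the two pruning mechanisms — the infeasible-subpath negation constraint $\phi_{neg}$ (Section~\ref{prune_section}) and the global/random waiting constraints — never remove the particular $\rho_{step}$ corresponding to our witness run: the waiting constraints only eliminate stutter-redundant variants (which still leaves at least one representative, since the shallow path is canonical), and $\phi_{neg}(\rho'_{inf})$ is only added when $\rho'_{inf}$ was certified infeasible by $RA$, and by the over-approximation invariant an infeasible subpath cannot be a prefix of a path admitting a real run; the ``extensions of an infeasible path are infeasible'' remark must be justified by monotonicity of $postC$/$postD$ (empty $\textbf{C}^{new}$ propagates forward).

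The main obstacle I anticipate is proving the over-approximation invariant rigorously in the \emph{compositional / shallow-semantics} setting, rather than the monolithic setting of \cite{10.1145/3567425}. Concretely I must show that (a) merging component flows, invariants, guards and resets via Definition~\ref{def:composed_ha} inside $postC$ and $postD$ faithfully captures the product dynamics, so no continuous behavior of the composed automaton is lost; (b) the computation-tree exploration in Algorithm~\ref{Algo:path-wise-RA.}, which follows \emph{shallow} semantics, nonetheless covers \emph{every} interleaving (step-semantics and single-transition-semantics) ordering of the transitions grouped at each shallow step — this is the crux of why ``reachability analysis of one shallow path suffices for all mapping interleaving paths,'' and it rests on the $make\_compatible$ handling together with the $Frq(\omega)=Sh\_comps(\omega)$ synchronization check correctly enumerating each admissible ordering; and (c) the caching via $flowpipe\_map$ and $successor\_map$ is sound — i.e., a memoized entry keyed on $(\textbf{loc},\textbf{C})$ (resp. $((\textbf{loc},\textbf{C}^{flow}),\omega)$) is semantically reusable across different shallow paths, because the computed successor depends only on the key and not on the path context. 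Points (b) and (c) together are the heart of the argument; once the over-approximation invariant is in hand, plugging it into the contradiction scheme above is routine. Termination (and hence that the algorithm does reach a verdict) follows because the enumeration loop runs over bounded $l \leq k$, the number of step paths of bounded length is finite, and each iteration either returns or strictly strengthens $\phi^l_{\H}$ by a negation clause excluding at least the current path, so the inner \texttt{while} loop cannot run forever.
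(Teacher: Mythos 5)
Your proposal is correct and takes essentially the same route as the paper's own proof: completeness of the step-path enumeration (Claim~\ref{claim_1}, resting on Propositions~\ref{prop1} and~\ref{prop2}) combined with the soundness of the over-approximate $postC$/$postD$ reachability computation inherited from prior work, yielding that a \emph{Safe} verdict excludes any bounded-length unsafe run while \emph{Unknown} is the only other outcome. The paper's argument is much terser and leaves implicit the sub-obligations you explicitly flag (coverage of all interleavings by the shallow-path computation tree, soundness of the $flowpipe$/$successor$ memoization, and of the path-negation pruning, the latter addressed only by the remark in Section~\ref{prune_section}), so your plan is a more detailed elaboration of the same argument rather than a different one.
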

\begin{proof}
    The path enumeration procedure depicted in algorithm \ref{algo:path_enum} ensures complete coverage of all step compositional paths, which collectively capture all possible interleaving compositional paths of length bounded by $k$. For each enumerated step compositional path, the proposed algorithm constructs a corresponding shallow compositional path and performs symbolic reachability analysis along this path. The reachability computation relies on the $postC$ and $postD$ operators, as defined in~\cite{FLGDCRLRGDM11}. Since these operators are proven to be sound over-approximations for computing reachable states, the correctness of our algorithm directly inherits from these results.
\end{proof}

\section{Results}
\subsection{\textbf{Experimental Setup}}
The experiments are performed on a 12-core Intel i7-12700K, $4.90$ GHz CPU with 64 GB RAM on Ubuntu $22.04$ OS. The proposed algorithm is implemented in the tool \textsc{SAT-Reach}, and we refer to this as \textsc{SAT-Reach-C}, the C standing for support of compositional models. We evaluate the performance of our proposed algorithm on several compositional affine and linear hybrid automata. A comparative performance evaluation is conducted with a state-of-the-art model checking tool \textsc{SpaceEx}, a symbolic reachability analysis tool for linear and piecewise affine hybrid systems, \textsc{XSpeed} and \textsc{SAT-Reach}, both are flowpipe-based reachability analysis tools for affine hybrid systems, and \textsc{BACH}, an SMT-based BMC tool for linear hybrid systems. In addition, we present a comparison with a well-known bounded reachability analysis tool for non-linear hybrid systems \textsc{dReach}. The experiments have a $20$ minute timeout and a 16 GB memory limit. Timeout and OOM indicate the instance could not produce results within the time and memory limits, respectively. The input format of \textsc{SAT-Reach-C} is the same as that of \textsc{SpaceEx}, with one model file describing the composition of hybrid automata and one configuration file describing the initial states, forbidden states, and parameter settings. The tables described below consist of \#Vars, \#Locations, and \#Trans, which represent the number of variables, locations, and transitions, respectively, in the product automaton of the CAHA or CLHA. Additionally, \#Comps denotes the number of components in the composed automata as shown in the following tables. The Bound shows the discrete jump of analysis in interleaving semantics for the respective instances. In our experiments, a system is deemed to be unsafe when the set of computed reachable states intersects with the unsafe states. The source code and repeatability instructions are available online in [\url{https://gitlab.com/Atanukundu/SAT-Reach/-/tree/dev}]. 

\subsection{Parameter Settings}
Performing a fair comparison among various tools requires carefully defining their parameter settings. The details of the parameter configurations are illustrated below.

\noindent \subsubsection{\textsc{dReach}}For \textsc{dReach}, we translate \textsc{SpaceEx}'s input language to \textsc{dReach} compatible input file (.drh) using the Hyst~\cite{bak2015hscc} model translation tool. For some models, the translation time overshoots the timeout threshold. We mark Timeout* for those instances in the tables. The rest of the instances take a reasonable time to convert the model files. In \textsc{dReach}, we have used the default $\delta = 0.001$.

\noindent \subsubsection{\textsc{SpaceEx}} The input language of \textsc{SAT-Reach-C} is the same as that of \textsc{SpaceEx}. We now briefly discuss the parameter settings, along with the model modifications that have been made:
\begin{itemize}
    \item LGG support function scenario is used for flowpipe computation in \textsc{SpaceEx}.
    \item We use the same \emph{sampling-time} and \emph{local-time-horizon} in \textsc{SAT-Reach-C} and \textsc{SpaceEx}.
    \item In both the tools \textsc{SAT-Reach-C} and \textsc{SpaceEx}, we used 100$\%$ \emph{clustering} with \emph{template hull aggregation}.
    \item We find the minimum value for the parameter \emph{Max. iterations} using trial-and-error, which ensures that forbidden states are reachable within that \emph{Max. iterations} for unsafe instances.
\end{itemize} 
\noindent To ensure that \textsc{SpaceEx} adheres to the BMC bound $k$ during state-space exploration, we introduce two new variables with constant dynamics in the model: $t\_count$ and $bound$. Initially \emph{bound} is set to $k$ and \emph{$t\_count$} is set to $0$. For every discrete transition, \emph{$t\_count$} is increased by $1$, and $t\_count <= bound - 1$ is checked in the \emph{guard}. Therefore, \textsc{SpaceEx} limits its state-space exploration to the BMC bound $k$.

\subsubsection{\textsc{XSpeed}, \textsc{SAT-Reach} and \textsc{SAT-Reach-C}}
We have used the same parameter settings for \textsc{SAT-Reach} and \textsc{XSpeed} as we used in \textsc{SAT-Reach-C}. We use a local time-horizon of 20 and a time-step of 0.01 for all instances except FDDI. For FDDI, we use a local time-horizon of 600 and 100 as the time-step.

\noindent \subsubsection{\textsc{BACHopt}}
\textsc{BACH} utilizes the same input language as \textsc{SpaceEx}'s input language. It leverages \textsc{cmsat} as a SAT solver for the mixed-semantics guided BMC procedure. \textsc{BACHopt} extends \textsc{BACH} by incorporating two additional optimization techniques, enhancing its scalability and efficiency.
\begin{itemize}
    \item \emph{nid}: Non-identical path guided path pruning
    \item \emph{rmults}: multiple infeasible path segments based pruning
\end{itemize}
The \textsc{BACHopt} tool is designed for linear hybrid automata models. This is why we show the result of \textsc{BACHopt} only for CLHA instances.

\subsection{Performance Evaluation on CAHA Benchmarks}
 
\noindent The performance of our tool \textsc{SAT-Reach-C} is evaluated using various CAHA benchmarks to assess its scalability and efficiency. The benchmark-wise detailed analysis is discussed below.

\subsubsection{Filtered Oscillator}

\begin{wraptable}{r}{0.35\textwidth}
        \centering
        \caption{Computation time of unsafe instances of filtered oscillator benchmark.}
        \resizebox{5cm}{!}{
        \begin{tabular}[c]{lccc}
                \hline
                {Instance} & FOU004 & FOU016 & FOU064 \\
                Safety & unsafe & unsafe & unsafe \\
                \# Vars & 6 & 18 & 66 \\
                \# Locations & 4 & 4 & 4 \\
                \# Comps & 5 & 17 & 65 \\
                \# Trans & 4 & 4 & 4 \\
                Bound & 3 & 3 & 3 \\
                \hline
                Tools & \multicolumn{3}{c}{Computation time in second} \\
                \hline
                \textsc{SpaceEx} & 0.05 & 0.24 & 3.01 \\
                \hline
                \textsc{dReach} & 1.14 & 12.18 & Error \\
                \hline
                \textsc{SAT-Reach} & 0.17 & 1.11 & 22.95 \\
                \hline
                \textsc{XSpeed} & 0.15 & 1.10 & 23.09 \\
                \hline
                \textsc{SAT-Reach-C} & 0.18 & 1.16 & 22.72 \\
                \hline
            \end{tabular}
        }
        \vspace{-10pt}
        \label{tab:FO}
    \end{wraptable}
    The Filtered Oscillator ~\cite{FLGDCRLRGDM11} models an oscillator with additional filters to reduce the oscillation amplitude. Filters increase the system dimension, testing the scalability of verification algorithms with the system dimension. The safety specification aims to avoid a designated region in the state-space. Table~\ref{tab:FO} represents the benchmark details and computational time required by each tool. The instance FOUn denotes the system consisting of $n$ filters, where U denotes that the instance is classified as an unsafe instance. Based on this table, we can conclude that \textsc{SpaceEx} demonstrates better scalability than the other tools. The composed automaton of the FOU64 instance contains only $4$ locations and $4$ transitions despite having $65$ components. In such cases, CEGAR-based state-space exploration does not show benefits when the composed automaton has a limited number of locations and transitions. 
   
\subsubsection{Navigation}
The NAV~\cite{FehnkerI04, DBLP:journals/sttt/BogomolovDFGJLP16} models the composition of hybrid automata of navigating objects in varying grid sizes. Affine ODEs, which may vary in each grid cell, describe the object's motion. When an object moves to a neighboring cell, the dynamics are updated instantaneously. The hybrid dynamics is expressed as a hybrid automaton, with a location for each grid cell and discrete transitions between adjacent locations. The different grid size leads to different benchmarks, such as NAVn, where n represents the grid size. Shared transitions have been introduced in the models proposed. The compositions have a large number of locations and transitions, testing the scalability of verification algorithms with the size of the product automaton. The safety specification prohibits entering certain cells in the grid from a designated initial region within a given cell. The tables \ref{tab:NAV3_unsafe} and \ref{tab:NAV3_safe} show the instance details along with the computational time required by the tools for the NAV3~\cite{DBLP:journals/sttt/BogomolovDFGJLP16} benchmark. In Tables \ref{tab:NAV3_unsafe} and \ref{tab:NAV3_safe}, the instances are generated by increasing the number of components while preserving the respective safety specifications, which leads to a complex system with a large number of locations and transitions. In table \ref{tab:NAV3_unsafe}, NAV3C2 represents a system that consists of two navigation systems of grid size $3$x$3$. As the number of locations and transitions in the composed automaton increases, most of the tools fail to classify the instances. \textsc{SAT-Reach} uses a path-guided state-space exploration algorithm on the composed automata and fails to classify the most complex instance (NAV3C5). \textsc{SpaceEx} is unable to classify the instances from NAV3C3 onwards, despite its ability to perform on-the-fly state-space exploration. In contrast, \textsc{SAT-Reach-C} efficiently classifies all such instances because of the CEGAR-based state-space exploration, without an explicit product automaton construction, in which the abstract counterexample targets only the relevant discrete state-space rather than exploring the entire state-space exhaustively. This approach makes \textsc{SAT-Reach-C}efficient and scalable among these tools.

\begin{table}[htp]
    \centering
    \begin{minipage}{0.49\linewidth}
        \centering
        \caption{Time required to classify unsafe instances of Navigation (NAV3) benchmarks.}
        \resizebox{0.8\linewidth}{!}{
    \begin{tabular}[c]{lcccc}
        \hline
        {Instance} & NAV3C2 & NAV3C3 & NAV3C4 & NAV3C5 \\
        Safety & unsafe & unsafe & unsafe & unsafe\\
        \# Vars & 8 & 12 & 16 & 20 \\
        \# Locations & 81 & 729 & 6561 & 59049 \\
        \# Comps & 2 & 3 & 4 & 5 \\
        \# Trans & 378 & 5103 & 61236 & 688905 \\
         Bound & 6 & 9 & 12 & 15 \\
        \hline
        Tools & \multicolumn{4}{c}{Computation time in second} \\
        \hline
        \textsc{SpaceEx} & 73.52 & Timeout & Timeout & Timeout\\
        \hline
        \textsc{dReach} & 22.88 & OOM & OOM & OOM \\
        \hline
        \textsc{SAT-Reach} & 1.07 & 4.09 & 310.92 & Timeout \\
        \hline
        \textsc{XSpeed} & 164.79 & Timeout & Timeout & Timeout\\
        \hline
        \textsc{SAT-Reach-C} & 0.50 & 1.33 & 2.55 & 3.37 \\
        \hline
    \end{tabular}
    }
    
    \label{tab:NAV3_unsafe}
    \end{minipage}
    \hfill
    \begin{minipage}{0.49\linewidth}
        \centering
        \caption{Time required to classify safe instances of Navigation (NAV3) benchmarks.}
        \resizebox{0.8\linewidth}{!}{
    \begin{tabular}[c]{lcccc}
        \hline
        {Instance} & NAV3C2 & NAV3C3 & NAV3C4 & NAV3C5 \\
        Safety & safe & safe & safe & safe\\
        \# Vars & 8 & 12 & 16 & 20 \\
        \# Locations & 81 & 729 & 6561 & 59049 \\
        \# Comps & 2 & 3 & 4 & 5 \\
        \# Trans & 378 & 5103 & 61236 & 688905 \\
         Bound & 4 & 6 & 8 & 10 \\
        \hline
        Tools & \multicolumn{4}{c}{Computation time in second} \\
        \hline
        \textsc{SpaceEx} & 9.88 & Timeout & Timeout & Timeout\\
        \hline
        \textsc{dReach} & 1.36 & 67.27 & OOM & OOM \\
        \hline
        \textsc{SAT-Reach} & 0.22 & 1.58 & 95.28 & Timeout \\
        \hline
        \textsc{XSpeed} & 45.59 & Timeout & Timeout & Timeout \\
        \hline
        \textsc{SAT-Reach-C} & 0.22 & 0.65 & 3.39 & 60.7 \\
        \hline
    \end{tabular}
    }
    
    \label{tab:NAV3_safe}
    \end{minipage}
\end{table}

\noindent Table \ref{tab:NAV_25} and \ref{tab:NAV_20} show the empirical results on other Navigation benchmarks, such as NAV25 and NAV20. The instances with U and S for both benchmarks indicate that these instances are known to be unsafe and safe instances, respectively. These two navigation systems are taken from \cite{DBLP:journals/sttt/BogomolovDFGJLP16} and \cite{FehnkerI04}, respectively. For these instances, \textsc{SAT-Reach}, and \textsc{XSpeed} fail to classify all instances except NAV20S1, whereas \textsc{SpaceEx} classifies 3 out of 8 instances within the time limit. In contrast, \textsc{SAT-Reach-C} successfully classifies all instances. As shown in Table~\ref{tab:NAV_25} and \ref{tab:NAV_20}, \textsc{SAT-Reach-C} demonstrates better scalability, efficiently handling large state-space within the time limit.

\begin{table}[htp]
    \centering
    \begin{minipage}{0.48\linewidth}
        \centering
        \caption{Scalability evaluation of NAV25 benchmarks.}
        \resizebox{0.8\linewidth}{!}{
    \begin{tabular}[c]{lcccc}
        \hline 
        {Instance} & NAV25U1 & NAV25U2 &  NAV25S1 &  NAV25S2\\
        Safety & unsafe & unsafe & safe & safe\\
        \# Vars & 8 & 8 & 8 & 8 \\
        \# Locations & 390625 & 390625 & 390625 & 390625 \\
        \# Comps & 2 & 2 & 2 & 2 \\
        \# Trans & 390600 & 390600 & 390600 & 390600  \\
        Bound & 4 & 25 & 4 & 8 \\
        \hline
        Tools & \multicolumn{4}{c}{Computation time in second} \\
        \hline
        \textsc{SpaceEx} & Timeout & Timeout & 304.12 &  Timeout\\
        \hline
        \textsc{dReach} & Timeout & OOM & Timeout & OOM \\
        \hline
        \textsc{SAT-Reach} & Timeout &  Timeout & Timeout & Timeout\\
        \hline
        \textsc{XSpeed} & Timeout & Timeout & Timeout & Timeout \\
        \hline
        \textsc{SAT-Reach-C} & 2.57 &  661.7 & 4.61 & 298.77 \\
        \hline
    \end{tabular}
    }
    
    \label{tab:NAV_25}
    \end{minipage}
    \hfill
    \begin{minipage}{0.48\linewidth}
        \centering
        \caption{Scalability evaluation of NAV20 benchmarks.}
        \resizebox{\linewidth}{!}{
    \begin{tabular}[c]{lcccccc}
        \hline
        {Instance} & NAV20U1 & NAV20U2 & NAV20U3 & NAV20S1 & NAV20S2\\
        Safety & unsafe & unsafe & unsafe & safe & safe \\
        \# Vars & 11 & 11 & 11 & 11 & 11\\
        \# Locations & 9729 & 9729 & 9729 & 9729 & 9729\\
        \# Comps & 2 & 2 & 2 & 2 &  2 \\
        \# Trans & 19204 & 19204 & 19204 & 19204 & 19204\\
        Bound & 30 & 47 & 38 & 9 & 15 \\
        \hline
        Tools & \multicolumn{5}{c}{Computation time in second} \\
        \hline
        \textsc{SpaceEx} & Timeout & Timeout & Timeout & 31.56 & 157.5\\
        \hline
        \textsc{dReach} & Timeout* & Timeout* & Timeout* & Timeout* & Timeout* \\
        \hline
        \textsc{SAT-Reach} & Timeout & Timeout & Timeout & 84.08 & Timeout\\
        \hline
        \textsc{XSpeed} & Timeout & Timeout & Timeout & 46.88 & Timeout \\
        \hline
        \textsc{SAT-Reach-C} & 14.91 & 89.93 & 13.94 & 2.39 & 12.37 \\
        \hline
    \end{tabular}
    }
    
    \label{tab:NAV_20}
    \end{minipage}
\end{table}

\begin{figure}
    \centering
    \begin{subfigure}[b]{.31\textwidth}
      \centering
	  {\includegraphics[width=\textwidth]{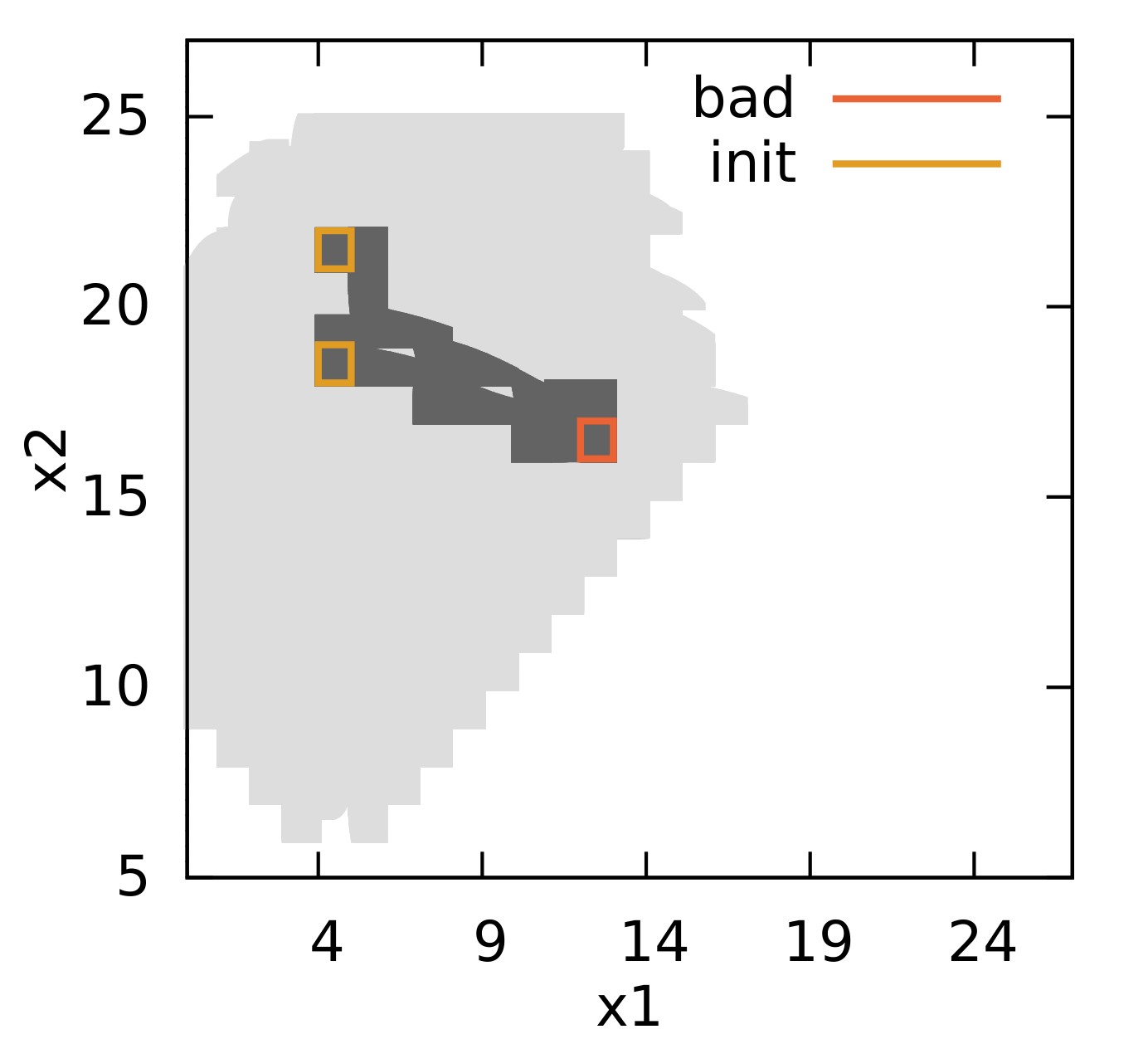}}
    \caption{ Explored reachable state-space for instance NAV\_25\_U2.}
    \label{fig:NAV_25_U3}
    \end{subfigure}%
    \hspace{3pt}
    \begin{subfigure}[b]{.31\textwidth}
      \centering
	  {\includegraphics[width=\textwidth]{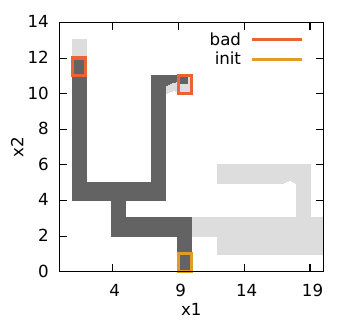}}
      \caption{Explored reachable state-space for an instance NAV\_20\_U1.}
    \label{fig:NAV_20_U1}
    \end{subfigure}%
     \hspace{3pt}
    \begin{subfigure}[b]{.31\linewidth}
      \centering
	  {\includegraphics[width=\textwidth]{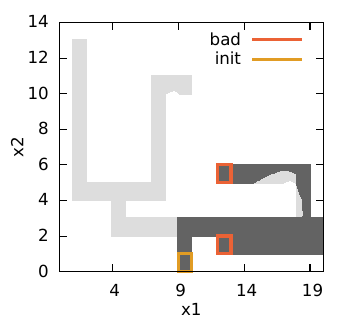}}
      \caption{Explored reachable state-space for an instance NAV\_20\_U3.}
    \label{fig:NAV_20_U1}
    \end{subfigure}%
    \caption{A comparison of computed reachable state-space among the tools \textsc{SpaceEx}, \textsc{XSpeed}, and \textsc{SAT-Reach-C}.}
    \label{fig:reachable_set}
\end{figure}

\noindent Figure \ref{fig:reachable_set} shows a comparison of explored reachable states by \textsc{XSpeed}, \textsc{SpaceEx}, and \textsc{SAT-Reach-C}. The distinct yellow and red boxes show the initial and unsafe regions of the two components, respectively. In figure \ref{fig:NAV_25_U3}, \textsc{SAT-Reach-C} shows a directed exploration (dark gray) leading towards the unsafe region. In contrast, \textsc{SpaceEx} computes the reachable states (light gray) in an uninformed way with breadth-first exploration. Similarly, in figure \ref{fig:NAV_20_U1}, \textsc{SAT-Reach-C} computes states (dark gray) towards the unsafe region, while \textsc{XSpeed} explores a wider set of states with breadth-first search (light gray). These figures demonstrate the benefits of CEGAR-based state-space exploration, in which the abstract counterexample directs the focus along the relevant discrete state-space, followed by its reachability analysis.

\subsection{Performance Evaluation on CLHA benchmarks}
We present here a performance evaluation of our tool on CLHA benchmarks and a comparison with CLHA supporting tools. These benchmarks are taken from ARCH COMP 2024~\cite{ARCH-COMP24}, a friendly competition in the PCDB category.

\begin{table}[htb]
	\setlength{\tabcolsep}{4pt}
	\renewcommand{\arraystretch}{1.2}
	\centering
	\caption{Computation times of the Nuclear Reactor System Benchmark.}
    \begin{adjustbox}{max width=0.9\textwidth}
	\begin{tabular}[c]{lcccccccccccc}
	\toprule
    {instance} & NRSS05 & NRSU05 & NRSS06 & NRSU06 & NRSS07 & NRSU07 & NRSS08 & NRSU08 & NRSS09 & NRSU09 & NRSS10 & NRSU10  \\ \midrule 
    safety & safe & unsafe & safe & unsafe & safe & unsafe & safe & unsafe & safe & unsafe & safe & unsafe \\
             \# Vars & 6 & 6 & 7 & 7 & 8 & 8 & 9 & 9 & 10 & 10 & 11 & 11  \\
         \# Locations & 1458 & 1458 & 5103 & 5103 & 17469 & 17469 & 59049 & 59049 & 196830 & 196830 & 649539 & 649539 \\
         \# Comps & 6 & 6 & 7 & 7 & 8 & 8 & 9 & 9 & 10 & 10 & 11 & 11  \\
         \# Trans & 3240 & 3240 & 13122 & 13122 & 51030 & 51030 & 192456 & 192456 & 708588 & 708588 & 2558790 & 2558790 \\
         Bound & 30 & 30 & 30 & 30 & 30 & 30 & 30 & 30 & 30 & 30 & 30 & 30 \\ \midrule 	
        
		 {tool} & \multicolumn{10}{c}{{computation time in second}} \\ \midrule
		  \midrule
          \textsc{SpaceEx} & 0.059 & Timeout & 0.068 & Timeout & 0.079 & Timeout & 0.09 & Timeout & 0.103 & Timeout & 0.116 & Timeout \\
          \hline
          \textsc{dReach} & 23.5 & Timeout & 367.44 & Timeout & OOM & OOM & OOM & OOM & OOM & OOM & OOM & OOM \\
          \hline
          \textsc{SAT-Reach} & 13.26 & 7.59 & 105.21 & 19.86 & 1049.48 & 157.43 & Timeout & Timeout & Timeout & Timeout & Timeout & Timeout \\
          \hline
          \textsc{XSpeed} & 1.16 & Timeout & 1.54 & Timeout & 1.98 & Timeout & 2.48 & Timeout & Timeout & Timeout & Timeout & Timeout \\
  		 \hline
          \textsc{SAT-Reach-C} & $1.64$ & $7.09$ & $2.18$ & $9.96$ & $2.82$ & $13.29$ & $3.61$ & $17.7$ & $4.56$ & $22.86$ & $6.42$ & $29.78$ \\
          \hline
          \textsc{BACHopt} & 0.16 & 0.03 & 0.25 & 0.04 & 0.38 & 0.05 & 0.38 & 0.06 & 0.45 & 0.08 & 0.62 & 0.08 \\
          \midrule 
	\end{tabular}
	\label{tab:compTimes:nrs}
 \end{adjustbox}
\end{table}

\begin{table*}[htb]
	\setlength{\tabcolsep}{4pt}
	\renewcommand{\arraystretch}{1.2}
	\centering
	\caption{Computation times of the FDDI benchmark. The $-$ in the table indicates that the Cartesian product of the CLHA could not be constructed within a reasonable time.}
    \begin{adjustbox}{max width=0.8\textwidth}
	\begin{tabular}[c]{lcccccccc}
	    \toprule
        {instance} & FDDIS05 & FDDIU05 & FDDIS06 & FDDIU06 & FDDIS07 & FDDIU07 & FDDIS08 & FDDIU08  \\ \midrule 
        safety & safe & unsafe & safe & unsafe & safe & unsafe & safe & unsafe \\
        \# Vars & 15 & 15 & 18 & 18 & 21 & 21 & 24 & 24  \\
         \# Locations & 7776 & 7776 & 46656 & 46656 & 279936 & 279936 & 1679616 & 1679616 \\
         \# Comp & 5 & 5 & 6 & 6 & 7 & 7 & 8 & 8 \\
         \# Trans & 22680 & 22680 & 164592 & 164592 & 1158624 & 1158624 & - & - \\ 
         Bound & 30 & 30 & 30 & 30 & 30 & 30 & 30 & 30 \\ \midrule 	
		{tool} & \multicolumn{8}{c}{{computation time in [s]}} \\ \midrule
		  \midrule
         \textsc{SpaceEx} & 0.006 & 0.817 & 0.008 & 3.909 & 0.011 & 21.968 & 0.014 & 142.311 \\
         \hline
         \textsc{dReach} & OOM & 1.89 & OOM & 11.31 & OOM & 83.03 & Timeout* & Timeout*\\
         \hline
         \textsc{SAT-Reach} & 232.07 & 0.54 & 231.73 & 9.86 & Timeout & Timeout & Timeout & Timeout \\
         \midrule 
         \textsc{XSpeed} & 0.01 & 0.39 & 0.02 & 2.13 & Timeout & Timeout & Timeout & Timeout \\
         \hline
         \textsc{SAT-Reach-C} & 0.48 & 0.09& 0.55 & 0.14 & 0.64 & 0.19 & 0.74 & 0.25\\
         \hline
         \textsc{BACHopt} & 0.09 & 0.1 & 0.11 & 0.09 & 0.12 & 0.26 & 0.14 & 0.17\\
  		 \bottomrule
	\end{tabular}
    \end{adjustbox}
    \label{tab:compTimes:fddi}
\end{table*}
\vspace{-10pt}
\subsubsection{Nuclear Reactor System}
The NRS~\cite{NRS} schedules the sequential insertion of multiple rods into heavy water to absorb the neutrons. The system comprises a controller and rods as its member automata. The safety specification states that the system should never reach a state where all rods are in the recovery phase and the controller is in the initial position. NRSUn and NRSSn represent the system with n rods, which are considered unsafe and safe, respectively, with respect to the safety specification.

\noindent Table \ref{tab:compTimes:nrs} shows the computational time required by each tool for the safe and unsafe instances. \textsc{SAT-Reach-C} and \textsc{BACHopt} are the tools that classify all the instances within a reasonable time. \textsc{BACHopt} differentiates itself from the \textsc{SAT-Reach-C} by first retrieving compositional paths from the discrete layer and subsequently verifying their feasibility using linear constraint solving. Linear dynamics lends to the reduction of feasibility checking to linear constraint solving, justifying its better performance. Many of the safe and unsafe instances timeout for \textsc{XSpeed} and \textsc{SAT-Reach}, because the tools spend considerable time in the construction of the product automaton. For the same instances, \textsc{SAT-Reach-C} terminates within the timeout since it does not explicitly compute the product automaton.

\subsubsection{FDDI}
The FDDI \cite{fddi, fddimodel} model is a standard data transmission protocol on fiber optic lines. The system consists of several stations for transmitting and receiving data. Each station in the system waits for the signal of the previous station to transmit data. The safety specification is that all stations are not in location $s4$. The instances FDDISn and FDDIUn represent the systems consisting of $n$ stations, which are considered safe and unsafe instances, respectively, with respect to the safety specification. Table \ref{tab:compTimes:fddi} shows that \textsc{BACHopt} is the best performer. 

\noindent Table \ref{tab:split-table} highlights the impact of shared reachability analysis. We observe a reduction in computation time across some instances. This improvement stems from the reuse of reachable states during symbolic reachability analysis, effectively avoiding redundant $postC$ computations. Consequently, the optimization enhances the scalability of the algorithm, making it more suitable for large and complex compositional systems (NAV25U2). Gray cells denote the smaller number of $postC$ operations needed to classify instances. 

\begin{table}[ht]
\caption{Comparison of \textsc{SAT-Reach-C} and \textsc{SAT-Reach-C} without optimizations. Time is the total time needed to classify an instance.}
\centering
\setlength{\tabcolsep}{4pt}
\renewcommand{\arraystretch}{1.1}
\begin{subtable}{0.48\textwidth}
\centering
\begin{adjustbox}{max width=0.95\textwidth}
\begin{tabular}{|c|c|c|c|c|}
\hline
\multirow{2}{*}{Instance} & \multicolumn{2}{|c|}{\textsc{SAT-Reach-C}} & \multicolumn{2}{|c|}{\textsc{SAT-Reach-C} without opt.} \\
\cline{2-5}
& Time & \#$postC$ & Time & \#$postC$ \\
\hline
FOU04 & 0.18 & 4 & 0.18 & 4 \\
\hline
FOU016 & 1.16 & 4 & 1.17 & 4 \\
\hline
FOU064 & 22.72 & 4 & 23.15 & 4 \\
\hline
NAV3C2 (U) & 0.50 & 7 & 0.44 & 7 \\
\hline
NAV3C3 (U) & 1.33 & 10 & 1.35 & 10 \\
\hline
NAV3C4 (U) & 2.55 & 13 & 2.54 & 13 \\
\hline
NAV3C5 (U) & 3.37 & 16 & 4.46 & 16 \\
\hline
NAV3C2 (S) & 0.22 & 5 & 0.19 & 5 \\
\hline
NAV3C3 (S) & 0.65 & \cellcolor[gray]{0.8}12 & 0.72 & 16 \\
\hline
NAV3C4 (S) & 3.39 & \cellcolor[gray]{0.8}32 & 4.54 & 65 \\
\hline
NAV3C5 (S) & 60.7 & \cellcolor[gray]{0.8}102 & 67.33 & 326 \\
\hline
NAV25U1 & 2.57 & 5 & 2.60 & 5 \\
\hline
NAV25U2 & 661.7 & \cellcolor[gray]{0.8}1532 & 3053.80 & 10528 \\
\hline
NAV25S1 & 4.61 & 5 & 5.54 & 10 \\
\hline
NAV25S2 & 298.77 & \cellcolor[gray]{0.8}219 & 416.27 & 738 \\
\hline
NAV20U1 & 14.91 & 48 & 15.21 & 52 \\
\hline
NAV20U2 & 89.93 & \cellcolor[gray]{0.8}218 & 127.48 & 694 \\
\hline
NAV20U3 & 13.94 & \cellcolor[gray]{0.8}61 & 14.23 & 66 \\
\hline
NAV20S1 & 2.39 & 9 & 2.42 & 9 \\
\hline
NAV20S2 & 12.37 & \cellcolor[gray]{0.8}36 & 12.44 & 37 \\
\hline
\end{tabular}
\end{adjustbox}
\end{subtable}
\hfill
\begin{subtable}{0.48\textwidth}
\centering
\begin{adjustbox}{max width=0.9\textwidth}
\begin{tabular}{|c|c|c|c|c|}
\hline
\multirow{2}{*}{Instance} & \multicolumn{2}{|c|}{\textsc{SAT-Reach-C}} & \multicolumn{2}{|c|}{\textsc{SAT-Reach-C} without opt.} \\
\cline{2-5}
& Time & \#$postC$ & Time & \#$postC$ \\
\hline
NRSS05 & 1.64 & \cellcolor[gray]{0.8}1 & 3.10 & 5 \\
\hline
NRSU05 & 7.09 & 11 & 7.17 & 11 \\
\hline
NRSS06 & 2.18 & \cellcolor[gray]{0.8}1 & 4.35 & 6 \\
\hline
NRSU06 & 9.96 & 13 & 10.5 & 13 \\
\hline
NRSS07 & 2.82 & \cellcolor[gray]{0.8}1 & 5.92 & 7 \\
\hline
NRSU07 & 13.29 & 15 & 13.58 & 15 \\
\hline
NRSS08 & 3.61 & \cellcolor[gray]{0.8}1 & 7.83 & 8 \\
\hline
NRSU08 & 17.7 & 17 & 17.85 & 17 \\
\hline
NRSS09 & 4.56 & \cellcolor[gray]{0.8}1 & 10.21 & 9 \\
\hline
NRSU09 & 22.86 & 19 & 23.08 & 19 \\
\hline
NRSS010 & 6.42 & \cellcolor[gray]{0.8}1 & 13.64 & 10 \\
\hline
NRSU010 & 29.78 & 21 & 30.3 & 21 \\
\hline
FDDIS05 & 0.48 & 2 & 0.71 & 2 \\
\hline
FDDIU05 & 0.09 & 7 & 0.08 & 7 \\
\hline
FDDIS06 & 0.55 & 2 & 0.82 & 2 \\
\hline
FDDIU06 & 0.14 & 8 & 0.13 & 8 \\
\hline
FDDIS07 & 0.64 & 2 & 0.95 & 2 \\
\hline
FDDIU07 & 0.19 & 9 & 0.18 & 9 \\
\hline
FDDIS08 & 0.74 & 2 & 1.08 & 2 \\
\hline
FDDIU08 & 0.25 & 10 & 0.25 & 10 \\
\hline
\end{tabular}
\end{adjustbox}
\end{subtable}
\label{tab:split-table}
\end{table}

\section{Conclusion}
A CEGAR-based bounded reachability analysis algorithm has been proposed for compositional hybrid systems with piecewise affine dynamics. The proposed approach avoids explicit construction of the parallel product automaton by performing counterexample search in the discrete abstraction of the automaton and refining the state-space by leveraging symbolic reachability analysis. Furthermore, the algorithm employs mixed compositional semantics, step semantics during abstract counterexample generation, and shallow semantics during state-space refinement for scalability. Optimizations, such as caching the computed reachable states for reuse, have been proposed. The algorithm is implemented in the tool $\textsc{SAT-Reach}$ and results demonstrate the scalability benefits compared to the state-of-the-art.

\bibliographystyle{plain}

\end{document}